\newcommand{\linespace}{\vspace{\baselineskip}}
\newcommand{\semilinespace}{\vspace{0.5\baselineskip}}
\newcommand{\upline}{\vspace{-1.2\abovedisplayskip}}
\renewcommand{\(}{\left(}
\renewcommand{\)}{\right)}
\newcommand{\e}{\mbox{e}}
\renewcommand{\hbar}{\hslash}
\newcommand{\Z}{\mathbb{Z}}
\newcommand{\R}{\mathbb{R}}
\newcommand{\E}{\mathcal{E}}
\renewcommand{\H}{\mathcal{H}}
\newcommand{\T}{\mathcal{T}}
\newcommand{\TF}{\mathcal{T}_\text{F}}
\newcommand{\rmF}{\text{\emph{F}}}
\newcommand{\<}{\langle}
\renewcommand{\>}{\rangle}
\newcommand{\qqquad}{\quad\quad\quad}
\newcommand{\impl}{\Longrightarrow}
\newcommand{\be}{\begin{equation}}
\newcommand{\ee}{\end{equation}}
\newcommand{\Pii}{\widetilde{\Pi}}
\theoremstyle{plain} %% This is the default
\newtheorem{thm}{Theorem}
\newtheorem{lemma}{Lemma}
\begin{document}

\title{The logic of the future in quantum theory\footnote{Expanded version of a talk given at the Arthur Prior Centenary Conference, Oxford, in September 2014}}

%\titlerunning{Short form of title}        % if too long for running head

\author{Anthony Sudbery\\[10pt] \small Department of Mathematics,
    University of York,\\[-2pt] \small Heslington, York, England YO10 5DD\\ 
    \small  Email:  as2@york.ac.uk}
\date{27 January 2015, revised 19 March 2016}

\maketitle

\begin{abstract}
According to quantum mechanics, statements about the future made by sentient beings like us are, in general, neither true nor false; they must satisfy a many-valued logic. I propose that the truth value of such a statement should be identified with the probability that the event it describes will occur. After reviewing the history of related ideas in logic, I argue that it gives an understanding of probability which is particularly satisfactory for use in quantum mechanics. I construct a lattice of 
future-tense propositions, with truth values in the interval $[0,1]$, and derive logical properties of these truth values given by the usual quantum-mechanical formula for the probability of a history.  
\end{abstract}
 
\section{Introduction}
\label{intro}

The subject of this paper is the logic governing the expressions of present experience, past memory and future expectations of a subject, regarded as a sentient and articulate physical system and therefore governed by the laws of physics in the framework of quantum theory. Since the expressions to be considered refer to the past, present and future of the subject, they must be governed by some form of temporal logic, such as is found in the work of Arthur Prior and his followers. Quantum theory being an indeterministic theory, the problem of future contingents is particularly relevant. In order to incorporate indeterminism in statements in the future tense, the temporal logic proposed here, differing from Prior's, has truth values which are not bivalent but take real values between 0 and 1 (inclusive) --- the truth of a future-tense statement is identified with the probability that it will turn out to be true.  Maybe this thesis will be of interest to philosophers of probability and philosophers of quantum mechanics as well as temporal logicians.

 Classically, it would be natural to take the past and present as definite: statements in the present and past tenses have truth values 0 or 1. For a apeaker governed by quantum theory, it is not clear that this can be incorporated in the framework suggested here without modification. This is currently being investigated. The present paper, therefore, is restricted to the study of the present and future tenses.

%A further difference with Prior's logic is that the tense operators are not restricted to three (past, present and future), but that in the case of past and future tenses they include a date: a positive real number which tells how far in the future or past lies the time that the proposition refers to.

The paper is organised as follows. In section 2 I outline the physical framework being used, that of quantum mechanics, and define the propositions in question: statements about the present and future experiences of the speaker, regarded as a physical system in a world governed by quantum mechanics. This requires a distinction between external statements about the physical world, made from a standpoint outside the world, and internal statements made by a speaker in the world; it is the latter that are discussed here. I argue that the superposition principle of quantum mechanics makes it impossible, in general, for such future-tense propositions to be either definitely true or definitely false, and I introduce my main proposal concerning the relation between probabiliy and the truth values of future-tense propositions in a many-valued logic.  

In the next section I acknowledge some of the problems with this idea, and attempt to trace the idea, and reactions to these problems, in previous writers. In the following section the focus shifts to the other side of the equation, from truth values to probability; I argue that this concept of probability is especially well suited to use in quantum mechanics. In particular, it allows the laws of motion of quantum mechanics (from the internal perspective) to be expressed as an indeterministic theory in continuous time, unlike the conventional postulates on the results of experiments.

 In Section 5 I examine the logical properties of the temporal logic as it might apply to the utterances of a speaker satisfying the laws of quantum mechanics. The logical system consists of a lattice $\T$ of tensed propositions with truth values in the interval $[0,1]$ of real numbers. The lattice $\T$ is constructed from a tenseless lattice $\E$ containing propositions about the experiences of a particular sentient system; the elements of $\T$ are elements of $\E$, each associated with a time, and elements constructed from them by operations $\land, \lor$ and $\lnot$ which are assumed to satisfy the same rules as conjunction, disjunction and negation. A further assumption on the tense operators (that they respect the logical operations in the tenseless lattice $\E$) leads to the characterisation of the elements of the lattice $\T$ as disjunctions of \emph{histories}: conjunctions of statements of experience at different times. 

An assignment of truth values to the elements of $\T$ is defined to be a function from $\T$ to real numbers which takes the value $0$ or $1$ on present and past propositions, but can take any value between $0$ and $1$ for future propositions. The bulk of this section is devoted to exploring the logical properties of the assignments of truth values to future-tense propositions which are obtained from the quantum-mechanical formula for the probability of a single history; the truth value of a disjunction of histories is defined by means of the formula
\[
\tau(p\lor q) = \tau(p) + \tau(q) - \tau(p\land q)
\]
which would be expected for probabililties. It has to be shown that this gives values between $0$ and $1$, and that it is consistent to apply it for all elements $p,q$ of the lattice $\T$. This is the content of Theorem \ref{disj}, to which the preceding chain of theorems has been leading. The usual truth tables for $\land$ and $\lor$ would follow from this if truth values were restricted to $0$ and $1$, but in general the usual relation between the truth value of a compound proposition and the truth values of its components is weakened; the extent of this weakening is laid out in Theorem \ref{truth}. This is followed by a discussion of whether the connective $\lor$ in this logic can legitimately be regarded as a generalised form of ``or" for future-tense statements.

Finally, in Section 6 I take a critical look at the theory adumbrated here and try to assess its relevance to different forms of indeterministic physical theory. I claim that it is the appropriate logic for a classical theory of an open future, and for a quantum theory applied to a world in which a mechanism of decoherence operates, as it does in the world\footnote{Here, and elsewhere, I use the word ``world" in its usual sense, or in the Wittgensteinian sense: ``everything that is the case". No reference is intended to the ``worlds" of the many-worlds interpretation of quantum mechanics.} we live in. The section concludes with a discussion of weaknesses in this proposal and possible alternative formulations. The results of the paper are summarised in the concluding Section 7.

\section{Outline of Quantum Theory}
\label{sec:quantum}

\subsection{Living in the Quantum World}
\label{subsec:qworld}

Quantum mechanics boasts a simple, well-defined mathematical formalism and an unprecedentedly successful record of calculation and prediction in physical applications. Following Schr\"odinger \cite{Schrodinger:interpretns}, Everett \cite{Everett} and Wheeler \cite{Wheeler}, I assume that everything in the mathematical formalism can be taken as a true description of the universe, and that no extra apparatus is needed to relate this description to what we measure and observe.

The clean mathematical formalism, in its nonrelativistic form, offers a single mathematical object $|\Psi(t)\>$ to describe whatever physical system we are interested in --- ultimately the whole universe $U$. This object is a vector in a Hilbert space $\H_U$ (the marks $|\>$ in the symbol indicate the vector nature of $|\Psi\>$), varying with time $t$ in accordance with a first-order differential equation (the Schr\"odinger equation). It follows that if the value of $|\Psi(t)\>$ is known at some one time, its value at all other times is determined.

This paper is concerned with the experiences and utterances of a sentient being $S$ (which could be you or me) regarded as a physical system in the universe. As such, $S$ has states which are described by the elements of a Hilbert space $\H_S$. The states of the rest of the universe (which I will sometimes call ``the outside world") are described by the elements of a Hilbert space $\H_{U'}$, and the state space $\H_U$ of the whole universe is the tensor product
\[
\H_U = \H_S \otimes \H_{U'}.
\]

Among the states\footnote{From now on, I will no longer be careful to distinguish between a state of a physical system and the vector in its Hilbert space which describes that state.}  in $\H_S$ are those in which the sentient being $S$ is having a definite experience which it can describe by present-tense propositions. It is probably true \cite{Tegmark} of the brain states of human beings that these experience states have definite values of various physiological and neurological quantities, and I will take this as part of the definition of a sentient physical system. According to quantum mechanics, then, the states of definite experience are eigenvectors of the corresponding \emph{experience operators}. The full set of eigenvectors of these operators is a basis for the Hilbert space $\H_S$ which I will call the \emph{experience basis}, although it will also include many vectors which do not correspond to experiences (or even to anything which looks like a sentient animal).

I will denote the experiences of $S$ by $\eta_i$, pretending for notational convenience that they are countable, and the corresponding elements of the experience basis by $|\eta_i\>$; abusing notation slightly, I will use the same symbol for a general element of the experience basis. 

Similarly, among the states of the whole universe are tensor products\footnote{In this notation, due to Dirac, juxtaposition of vector symbols $|\>$ denotes the tensor product of vectors.}  $|\Psi\> = |\eta\>|\psi'\>$, in which $|\eta\>\in\H_S$ is an experience state and $|\psi'\>$ is a state of the rest of the universe which conforms to the experience $\eta$ in the sense that if $\eta$ includes the experience of seeing, for example, a tiger about to pounce, then $|\psi'\>$ is a state of the outside world including a tiger about to pounce. Such product states are understandable; this is what we expect the sentient system and the outside world, together, to be like. 

But according to quantum theory, the laws of physics, in the form of the Schr\"odinger equation, will often cause such a product state of the universe to change into a less understandable state
\be\label{entangle}
|\Psi\> = \sum_i |\eta_i\>|\psi'_i\>
\ee
in which a number of different experience states enter, each associated with a different state of the outside world. In such a state the sentient system is \emph{entangled} with the outside world, and has no definite, unique experience. Eq. \eqref{entangle} is the form of the most general state of the universe; since the experiences states $|\eta_i\>$ form a basis for $\H_S$, any vector in $\H_S \otimes \H_{U'}$ can be written in this form. To take account of changes in time, we should write this as
\be\label{universe}
|\Psi(t)\> = \sum_i|\eta_i\>|\psi_i'(t)\>
\ee
in which the experience states $|\eta_i\>$ are independent of time, being a fixed basis of $\H_S$, and the states of the external world, $|\psi'_i\>$, are time-varying vector coefficients with respect to this basis.

There are two ways of approaching this strange equation without doing violence to the laws of physics as encapsulated in the Schr\"odinger equation. Both originate in Hugh Everett's work in the 1950s, and are expressed in the titles of his PhD thesis and his only published paper in physics. The title of his thesis is ``The theory of the universal wave function", which expresses his main motivation: the idea that quantum mechanics, as expressed in \eqref{universe}, should be taken seriously and literally: however hard to understand, this is a statement of fact about the whole universe. (Schr\"odinger also expressed this idea in unpublished lectures in the early 1950s \cite{Schrodinger:interpretns}. The title of Everett's paper is $`\,``$Relative state" formulation of quantum mechanics', which emphasises another way of understanding \eqref{universe}: any statement about the external world made by an observer must be understood as being \emph{relative} to a particular state of that observer: if, for any reason, we are entitled to say that the state of the observer at time $t$ is $|\eta_n\>$, then we are entitled to assert that the state of the rest of the universe is $|\psi'_n(t)\>$. And, of course, as an observer I \emph{am} entitled to say which $|\eta_n\>$ is my experience state, precisely because it is my experience.

These two assertions about the state of the universe appear to be in contradiction. One of them asserts that the truth about the universe at time $t$ is given by the whole state \eqref{universe}; the other that it is given by one of the components $|\eta_i\>|\psi'_i(t)\>$. I now want to argue that they are, nevertheless, compatible, once it is recognised that they belong in different perspectives.

\subsection{Internal and external}
\label{susec:intext}

This contradiction is of the same type as many familiar contradictions between objective and subjective statements. It can be resolved in the way put forward by Thomas Nagel \cite{Nagel:subjobj,Nagel:nowhere}: we must recognise that there are two positions from which we can make statements of fact or value, and statements made in these two contexts are not commensurable. In the \emph{external} context (the God's-eye view, or the ``view from nowhere") we step outside our own particular situation and talk about the whole universe. In the \emph{internal} context (the view from \emph{now here}), we make statements as agents inside the universe. Thus in the external view, $|\Psi(t)\>$ is the whole truth about the universe; the components $|\eta_i\>|\psi'_i(t)\>$ are (unequal) parts of this truth. But in the internal view, from the perspective of some particular experience state $|\eta_i\>$, the component $|\eta_i\>|\psi'_i(t)\>$ is the actual truth. I may know what the other components $|\eta_j\>|\psi'_j(t)\>$ are, because I can calculate $|\Psi(t)\>$ from the Schr\"odinger equation; but these other components, for me, represent things that \emph{might} have happened but \emph{didn't}. From the external perspective the universal state vector $|\Psi(t)\>$ is a true description of the world at time $t$, and each component $|\eta_j\>|\psi'_j(t)\>$ is just part of that description; from the internal perspective of the experience state $|\eta_i\>$ at time $t$, the state vector $|\eta_i\>|\psi'_i(t)\>$ is a true and complete description of the world, and the universal state vector $|\Psi(t)\>$ is not a description of the world but a true statement of how the world might change.

\subsection{Time and chance}
\label{subsec:tchance}

Two questions present themselves. First, time: if it is unproblematic that at time $t$ I can identify my experience $|\eta_i\>$ and the corresponding state $|\psi'_i(t)\>$ of the rest of the universe, what can I say about other times? For past times, memory might be expected to provide an answer: it seems that we are constructed in such a way that each experience state $|\eta_n\>$ that actually occurs at time $t$ (i.e. for which $|\psi'_n(t)\>$ is non-zero) contains information about a unique experience state at each time before $t$. However, quantum mechanics apparently does not allow such a simple model of memory, for reasons related to the ``watched-pot effect" \cite{MisraSuda, obsdecay}. For this reason I will only treat present and future tenses in this paper, and leave the past tense for future work.

 In general, for a being having a definite experience $\eta_i$ at time $t$, there is nothing in the physics to pick out the experience that the being will have at some future time $t'$: there is no ``thin red line" stretching into the future; no fact of the matter, even relative to the experience $\eta$, about what the subject's experience will be at time $t'$.\footnote{It might seem hard to abandon the assumption, by which we seem to live our lives, that there is something that is going to happen tomorrow, whether or not it is determined by the laws of physics; this might look like just another unacceptable example of scientists dismissing as an illusion a perception which is too immediate to be deniable (like consciousness or free will). However, it has been argued (\cite{openfuture}, \cite{Wallace:multiverse} p. 274) that the openness of the future accords better with our pre-scientific intuition than the existence of a definite future.} The Schr\"odinger equation can be applied to $|\eta_i\>|\psi'_i(t)\>$ to yield a state vector $\e^{-iH(t'-t)/\hbar}\big(|\eta_i\>|\psi'_i(t)\>\big)$ to which it will evolve at a given future time $t'$; but in general this will not be another product state in which the subject has a definite experience, but a superposition of such states in which the subject is entangled with the outside world. 

Second, probability: quantum mechanics is an indeterministic theory, describing chance events. Its empirical success rests on its ability to give probabilities for such events. But how can there be any place for probability in the framework I have described? In the external view, $|\Psi(t)\>$ develops deterministically according to the Schr\"odinger equation; nothing is left to chance. In the internal view, as we have just seen, there are no definite future events and therefore, it seems, no chances. How can ``the probability that my experience will be $\eta_j$ tomorrow" mean anything if ``my experience will be $\eta_j$ tomorrow" has no meaning?

\subsection{Probability and truth}
\label{subsec:probtruth}
 
 My answer to these two questions is to propose that they answer each other. If there is no experience that can be identified as the experience I will have at a future time $t'$, then for each $\eta$ the statement ``I will experience $\eta$ at $t'$" is not true. Nevertheless, if that experience is very likely then this statement is nearly true; in other words, it has a \emph{degree of truth} less than, but close to, 1. This leads to the suggestion that the probability of a future event $E$ should be identified as the degree of truth of the future-tense statement ``$E$ will occur". 
 
This argument reflects the way that probabilities are calculated in quantum mechanics.  The vector $\e^{-iH(t'-t)/\hbar}\big(|\eta_i\>|\psi'_i(t)\>\big)$ to which an experience eigenvector $|\eta_i\>|\psi'_i(t)\>$ evolves from time $t$ to time $t'$, following the Schr\"odinger equation, will not in general be an experience eigenvector; but it may be near an experience eigenvector, and the nearness is measured by a number (close to 1 if the vectors are near to each other) which, in conventional quantum mechanics, is identified (according to the ``Born rule") with a probability .
 
 What exactly is a ``degree of truth" for a future-tense statement? In the rest of the paper I will explore the possibility that it can be taken to be a truth value in a temporal logic.

\section{Truth values as probabilities}
\label{probtruth}

\subsection{Problems}
\label{subsec:probprobs}

{\bf 1.} One function of truth values is to define the meaning of logical connectives such as \emph{and} and \emph{or} by means of truth tables. The existence of such tables is made a prerequisite for many-valued logics in Gottwald's comprehensive treatise \cite{Gottwald:manyvalued}. This is not possible with probabilities. Using $\tau$ to denote the probability of a proposition, the values of $\tau(p\land q)$ and $\tau(p\lor q)$ are not determined as functions of $\tau(p)$ and $\tau(q)$; the two equations which this would require are replaced by a single relation
\be\label{eq:prob}
\tau(p\land q) + \tau(p\lor q) = \tau(p) + \tau(q)
\ee
with inequalities
\begin{align}\label{ineq:prob}
0 \le \tau(p\land q) &\le \min\{\tau(p),\tau(q)\},\\
\max\{\tau(p),\tau(q)\} &\le \tau(p\lor q) \le 1.
\end{align}

{\bf 2}. Another use of truth values is to identify logical tautologies, which are defined as formulae, with propositional variables, which have truth value 1 for all assignments of truth values to the variables. If truth values were replaced in this definition by probabilities satisfying \eqref{eq:prob}, it would not yield a useful set of tautologies. Instead, the usual procedure in probability logic \cite{Adams:problogic} is to require that the value of probability should be 1 for all logical tautologies, assuming that these have already been identified.

\subsection{History}
\label{subsec:history}

Aristotle's discussion of future contingents in \emph{De Interpretatione} is often taken to support a three-valued logic for statements about the future. But Aristotle says more than denying truth or falsity to ``There will be a sea-battle tomorrow"; he also notes that it may be more or less \emph{likely} that there will be a sea-battle. Thus if there is a case for regarding Aristotle as a proponent of many-valued logic for future-tense statements, there might also be a case that he would regard the appropriate truth values as probabilities.

\L ukasiewicz's first system of many-valued logic \cite{Luk:prob} had truth values related to probabilities, though of a rather different kind from those considered here (he was concerned with propositions containing variables, giving them truth values equal to the proportion of values of the variable which made the proposition true). Later \cite{Luk:indeterminism} he was motivated by the problem of future contingents and expressed a preference for many-valued logic in which the truth values could be any real number between 0 and 1; he asked how this was related to probability theory. In his system the truth values do not satisfy \eqref{eq:prob}, but they satisfy \eqref{ineq:prob} with inequality replaced by equality, so that the logical connectives are truth-functional. 

The failure of probabilities to be truth-functional led to a reluctance among logicians to accept them as truth values, but this view was defended by Reichenbach \cite{Reichenbach:probty} (though his frequentist conception of probability was different from that espoused here) and especially by Rescher \cite {Rescher:manyval}. Both these authors claimed to prove that the tautologies of ordinary two-valued logic can all be obtained as tautologies of probability logic, but Reichenbach's argument depended on his frequentism and Rescher's, though more formal, required axioms which themselves refer to the classical tautologies, giving his argument an element of circularity.

In a discussion of time, quantum mechanics and probability, Saunders \cite{Saunders:Synthese3} puts forward a view of time and probability similar to that of this paper: ``events in the future \ldots are \emph{indeterminate}; \ldots probabilities \ldots express the degrees of this indeterminacy" (emphasis in the original). The identification of probabilities with truth values of future-tense propositions has been developed  by Pykacz (\cite{Pykacz} and references therein), who applies this idea to quantum mechanics and uses it to throw interesting light on the GHZ paradox. Pykacz's many-valued logic, however, is different from the one developed here. 

The notion of degrees of truth also occurs in fuzzy logic \cite{Zadeh:fuzzy,Edgington,Sainsbury}, and in this context also Edgington proposed replacing the classical truth tables by the relation \eqref{eq:prob}. More complicated truth values occur in topos theory \cite{Lawvere}, which has been proposed as a suitable logical formalism for the foundations of physics by Isham  \cite{Isham:topos} and for the discussion of partial truth by Butterfield \cite{Butterfield:partial}. D\"oring and Isham have identified probabilities with truth values in topos theory \cite{DoringIsham:prob}, but the probabilities they discuss seem to be credences rather than chances, occurring in both classical and quantum physics but, in the latter, relevant to mixed states and not to pure states. Isham has also formulated a temporal form of quantum logic \cite{Isham:temporalqlogic} in which histories are treated as propositions.

\section{Probabilities as truth values}

In the previous section the focus was on the truth values of future-tense propositions; it was argued that in a world governed by quantum mechanics, such truth values should be equated with probabilities. In this section I will examine the other side of this equation, and argue that the most appropriate understanding of probability in quantum mechanics is that it is a truth value of a future-tense proposition --- or at least that such an understanding is as good as any other.

There are several different concepts that go under the name of ``probability" (or, if you think that it is a single concept, several theories of probability) (\cite{Gillies}, \cite{Wallace:multiverse} part II). Probability can be a frequency, or proportion, of favourable instances among all relevant instances; the probability of a proposition can be a logical property (``degree of entailment") of a proposition relative to another proposition; the probability of an event can be a subjective concept, the degree of belief of a particular person, measured by the betting odds on the occurrence of the event that are acceptable to that person; relatedly, it can be a guide to action for a person; or it can be an objective property, the \emph{propensity} of a physical set-up to produce the event.  

All of these have been put forward as appropriate ways to understand the concept of probability in physics. Frequentist theories, with their apparent no-nonsense relation to empirical data, are attractive to physicists, but it is difficult to make them coherent and non-circular \cite{Wallace:multiverse}.  Classical deterministic physics has a place for probability in situations of incomplete knowledge, which is similar to logical probabiilty; some theories, following the lead of David Bohm \cite{Bohmian}, hold that probability in quantum mechanics arises in the same way. However, such theories, known as ``hidden variable" theories, are generally regarded as having no empirical support. 

The subjective concept of ``degree of belief" is held to provide the meaning of all statements of quantum mechanics by one school of physicists, the quantum Bayesians or ``QBists" \cite{QBism}, but even the apparently opposite camp of Everettians, who believe in the objective truth of a wave function of the universe, appeal to an agent's degree of belief, manifested in their acceptance of betting odds, to give meaning to probabiliity. Thus Wallace (\cite{Wallace:multiverse} p. 122) treats probability as ``\emph{definitionally} linked to rational decision-making" (my emphasis). 
%He considers a situation in which an agent is choosing between bets on the outcome of an experiment, and proves that the agent's preferences between these bets should be governed by the Born rule, on the basis of some axioms about the availability of bets and the rationality of the bettor.

Many, however, would reject this subjectivist approach as abandoning the scientific ideal of an objective theory. It might seem that this is the appropriate concept in the context of this paper, based as it is on the experience of an individual subject and their internal statements about the world. However, ``internal" is not the same as ``subjective"; the sentient being of this paper can aspire to statements which are objectively true in the experience state in which they find themselves. The criterion for success in the quest for objectivity, in this internal context, is the same as in any other view of the world: I am reassured that my statement is objective if others with whom I am in communication agree with me.

Apart from the Bohmians, all schools of thought regard probability in quantum mechanics as dealing with the outcomes of human actions. Textbooks, following Bohr and Heisenberg in what is known as the ``Copenhagen interpretation", still give probabilities only for the results of experiments or for the values of physical quantities found in a measurement; Wallace, for Everettians, considers the decisions of an \emph{agent} and bases his proof of the Born rule on the choices of this agent between different actions; Fuchs et al., for QBists, consider probabilities to be our beliefs about the response of the world to our interventions in it.

But consider my situation. I have retired, and I am very lazy. I spend my days sitting in my garden reading newspapers and watching my grandchildren grow up. I don't do anything, but I'm interested in what's happening, what has happened and what is going to happen. I know what I experience now; I have an excellent memory, and I know what I experienced at all times in the past, or at least at a cloely-packed series of times in an open interval extending up to the present. From these experiences I infer some of what has happened elsewhere in the wider world. But I don't know what will happen in the future. I need some kind of theory to tell me, but our best theory, quantum mechanics, offers me no certainty. OK, what is \emph{likely} to happen in the future? This does not mean ``What do I happen to believe about the future?" I have no beliefs about the future, I simply have no idea what is likely to happen; I depend on physics to tell me.

I do no experiments, make no measurements; according to the Copenhagen interpretation, quantum mechanics will give me no probabilities. I take no action, make no decisions; decision theory is irrelevant to me. I am not a gambler, I never place a bet; I am not interested in the guide to action that Wallace offers me as probabilities, and anyway, because I take no action I do not satisfy the axioms from which he derives those probabilities. Does this mean that probability is meaningless for me? No: I am curious about the future, I want to know what could possibly happen and how likely the different possibilities are. Moreover, I want to know about all times in my future, not just the times when someone somewhere takes it into their head to make a measurement.

What could ``probability" mean for a passive observer like me? That is the question to which this paper proposes an answer: there is such a thing as objective probability, or chance; it applies only to future events, and it consists of the truth value of the statement that such an event will occur.

This is not just a matter of the idle curiosity of old codgers sitting in their gardens. Passive observation, as opposed to active experimentation, is an important element in scientific practice, giving quantitative empirical data which need theoretical calculation --- for example, the time of decay of an unstable system. Such calculations are performed, of course, but they do not have a secure basis, firmly grounded in a well-defined theory \cite{verdammte}; the theoretical description of a measurement of time of decay \cite{obsdecay} has a somewhat uncertain connection to the textbook axioms governing the results of experiments in quantum mechanics. A theory of probabilities as truth values, which can be applied to a continuous range of times, allows a tighter treatment of decay problems from first principles. This will be elaborated elsewhere.

Although quantum theory introduced indeterminism at a fundamental level in physical science, the form of the theory obscures this. The only clear, well-defined law of time evolution in conventional quantum mechanics is the Schr\"odinger equation, which is deterministic: it is a mathematical procedure which takes as input a mathematical description of the state of a physical system at time $t_0$ and produces as output a mathematical description of a single state of the system at a later time $t$. One would expect an indeterministic theory to take the same input and produce as output a set of possible states at the later time $t$ and a probability distribution over this set. Text-book quantum mechanics does not do this; but it is exactly what is provided in the internal perspective described here.

\section{The logic of tensed propositions in our \\quantum world}
\label{sec:logic}

\subsection{The lattice of propositions}
\label{subsec:lattice}

Living in the quantum world, as we do and as we are described in section \ref{subsec:qworld}, what can we say about it? Statements about our possible experiences form a Boolean lattice related to the experience state vectors $|\eta_i\>$. In the fiction that there is a countable basis $|\eta_i\>$, reports of the experiences $\eta_i$ are atoms in this lattice; more generally, it is a Boolean sublattice $\E$ of the lattice of closed subspaces of the Hilbert space $\H_S$.  To form the lattice $\T$ of statements that we want to make about our experience we need maps $N:\E\to\T$ (to give statements about our present experience), $P_t:\E\to\T$ for each positive real number $t$ (to give statements about our experience a time $t$ in the past) and $F_t:\E\to\T$ (for the future). The complete lattice $\T$ is then generated by the images $N(\E)$, $P_t(\E)$ and $F_t(\E)$. This lattice consists of statements that we make, and should conform to the structure of our language; I therefore assume that the lattice has a classical structure, and in particular that $\land$ distributes over $\lor$. (The non-classical features will enter when we consider the relation of this lattice to the physical world, given by truth values --- ``everything that is the case"). The map $N$ simply embeds the lattice $\E$ into $\T$ --- it adds the word ``now" to a report of an experience --- and therefore respects the structure of these reports, so $N$ is an injective lattice homomorphism. Since we are assuming that memory gives reports of past experience with the same quality of definite truth or falsehood as present experiences, I also take the maps $P_t$ to be injective homomorphisms. 

It is not so clear that the future operators $F_t$ should be homomorphisms, but it is not clear what ``and" and ``or" should mean for tensed statements with truth values between $0$ and $1$. I will assume that each $F_t$ is a homomorphism in order to give a meaning to $p\land q$ and $p\lor q$ for all tensed statements $p$ and $q$, but these meanings might be different from ``$p$ and $q$" and ``$p$ or $q$". This is discussed further at the end of this section. A major objective of the mathematical development in this section is to delineate the differences between $\land$ and ``and", and between $\lor$ and ``or"; this is the content of Theorem \ref{truth}.

On the assumption that $\T$ is a distributive lattice, it follows that every proposition in $\T$ is a disjunction of \emph{histories} $h_\text{P}\land h_0 \land h_\text{F}$ where
\be\label{history}
h_0 = N(\Pi_0), \quad h_\text{F} = F_{t_1}(\Pi_1)\land \ldots \land F_{t_n}(\Pi_n) \; \text{ with } \; 0 < t_1 < \cdots < t_n,
\ee
and $h_\text{P}$ is formed similarly with past operators $P_t$. Here each $\Pi_k$ represents an element of the lattice of subspaces $\E$, being the linear operator of orthogonal projection onto the subspace. I will refer to $n$ as the \emph{length} of the history $h_\text{F}$. Since the departures from classical logic in this system occur only in future-tense propositions, the rest of this section will be concerned only with the sublattice $\TF$ generated by $F_t(\E)$ for all $t$.

\subsection{The truth of histories: Conjunction}
\label{subsec:truth}

Truth values are assigned to elements of $\T$ from the perspective of a particular experience $\eta_0$ at a time $t_0$. Past and present propositions are taken to obey classical logic, so any element $N(\Pi)$ or $P_t(\Pi)$ has a truth value of $0$ or $1$, and elements of the sublattice generated by these have truth values determined by the usual truth tables. 

The truth value of a future proposition $F_t(\Pi)$, however, is equated with its probability and could lie anywhere in the closed unit interval $[0,1]$. It is determined by quantum mechanics as follows. The component of the universal state vector determined by the experience $|\eta_0\>$ at the time $t_0$ is $|E_0\> = |\eta_0\>|\psi'_0(t_0)\>$, which would evolve by the Schr\"odinger equation to $\e^{-iHt/\hbar}|E_0\>$ after a time interval $t$, where $H$ is the universal Hamiltonian. On the other hand, the experience state $|\eta_j\>$ will, after the lapse of time $t$, be associated with the component
\[
|\eta_j\>|\psi'_j(t_0 + t)\> = (\Pi_j\otimes 1)|\Psi(t_0 + t)\> = (\Pi_j\otimes 1)\e^{-iHt/\hbar}|E_0\>
\]
of the universal state vector. The geometrical measure of the closeness of these two vectors is taken to be the truth value of the statement ``I will have experience $\eta_j$ after a time $t$" in the context of experience $\eta_0$ at time $t_0$ (from now on this context will be understood):
\be\label{eq:onetimetau}
\tau\left(F_t(\Pi_j)\right) = \left|\<E_0|\Pii_j|E_0\>\right|^2
\ee
where $\tau$ denotes truth value and
\[
\Pii_j = \e^{iHt/\hbar}(\Pi_j\otimes 1)\e^{-iHt/\hbar}.
\]
Eq. \eqref{eq:onetimetau} is the usual expression (the Born rule) for the probability in quantum mechanics.

To extend this to a conjunction of future-tense propositions, i.e.\ to a history $h_\text{F}$ given by \eqref{history}, we adopt the standard extension of the Born rule \cite{Griffiths:book,Wallace:multiverse} to the probability of a history $h_\text{F} = F_{t_1}(\Pi_1)\land \ldots \land F_{t_n}(\Pi_n)$:
\begin{align}\label{tau}
\tau(h_\text{F}) &= \<E_0|\Pii_1\ldots\Pii_{n-1}\Pii_n\Pii_{n-1}\ldots\Pii_1|E_0\>\\
&= \<E_0|C_h C_h^\dagger|E_0\>\notag
\end{align}
where $C_h$ is the \emph{history operator}
\be\label{historyop}
C_h = \Pii_1\cdots\Pii_n.
\ee
Note that if $t_1 = t_2$,
\[
\tau(h_1\land h_2) = \<E_0|\Pii_1\Pii_2|E_0\> = \<E_0|\Pii_1^2\Pii_2|E_0\> = \<E_0|\Pii_1\Pii_2\Pii_1|E_0\>
\]
since the projectors $\Pi_1$ and $\Pi_2$ commute and therefore so do $\Pii_1$ and $\Pii_2$ if $t_1 = t_2$. So the formula \eqref{tau} for $\tau(h_1\land h_2)$ holds for $t_1 = t_2$ as well as $t_1 < t_2$.

I will now explore the logical properties of this definition. First we note the elementary fact
\begin{lemma}
Let $\Pi$ be a projection operator and $|\psi\>$ any state vector. Then

\linespace \emph{(i)}\upline
\[
0 \le \<\psi|\Pi |\psi\> \le \<\psi|\psi\>,
\]

\emph{(ii)}\upline
\[ 
\<\psi|\Pi |\psi\> = 1 \; \iff \; \Pi|\psi\> = |\psi\>.
\]
\end{lemma}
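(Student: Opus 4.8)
The plan is to use nothing beyond the two defining properties of an orthogonal projection, $\Pi^\dagger = \Pi$ and $\Pi^2 = \Pi$, together with the remark that the complementary operator $1-\Pi$ is again an orthogonal projection. For part (i), I would first factor $\Pi = \Pi^\dagger \Pi$ and write
\[
\<\psi|\Pi|\psi\> = \<\psi|\Pi^\dagger\Pi|\psi\> = \big\| \Pi|\psi\> \big\|^2 \ge 0,
\]
which gives the lower bound. For the upper bound I would apply exactly the same reasoning to $1-\Pi$, noting $(1-\Pi)^\dagger = 1-\Pi$ and $(1-\Pi)^2 = 1 - 2\Pi + \Pi^2 = 1-\Pi$, so that
\[
\<\psi|\psi\> - \<\psi|\Pi|\psi\> = \<\psi|(1-\Pi)|\psi\> = \big\| (1-\Pi)|\psi\> \big\|^2 \ge 0.
\]

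For part (ii), reading $|\psi\>$ as normalized so that $\<\psi|\psi\> = 1$, the direction $\Pi|\psi\> = |\psi\> \Rightarrow \<\psi|\Pi|\psi\> = 1$ is immediate. Conversely, if $\<\psi|\Pi|\psi\> = 1$, the identity displayed just above forces $\big\| (1-\Pi)|\psi\> \big\|^2 = 0$, hence $(1-\Pi)|\psi\> = 0$, i.e.\ $\Pi|\psi\> = |\psi\>$. So the whole statement reduces to one application of positivity of a squared norm, used once for $\Pi$ and once for $1-\Pi$.

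There is no real obstacle here; the lemma is genuinely elementary, as the text says. The one point deserving a word of care is the normalization convention in (ii): the equivalence as stated presupposes $\<\psi|\psi\> = 1$ — otherwise $\Pi|\psi\> = |\psi\>$ only yields $\<\psi|\Pi|\psi\> = \<\psi|\psi\>$ rather than $1$ — which is the standing assumption on state vectors and in particular holds for the vector $|E_0\>$ to which the lemma is later applied. Everything else is a one-line manipulation of the projection identities.
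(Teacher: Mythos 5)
Your proof is correct and follows essentially the same route as the paper: both arguments reduce the bounds to the positivity of $\bigl\|\Pi|\psi\>\bigr\|^2$ and $\bigl\|(1-\Pi)|\psi\>\bigr\|^2$ via idempotency and self-adjointness, with the equality case of (ii) coming from the vanishing of the latter norm. Your side remark that (ii) tacitly assumes $\<\psi|\psi\>=1$ is accurate and consistent with the paper's later use of $\<E_0|E_0\>=1$.
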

\begin{proof}
\[
\<\psi|\psi\> - \<\psi|\Pi|\psi\> = \<\psi|(1 - \Pi)|\psi\> = \<\psi|(1-\Pi)^2|\psi\> \ge 0,
\]
with equality if and only if $\Pi|\psi\> = |\psi\>$.
\end{proof}

\begin{thm}
For any future history $h_\text{\rmF}$,
\[
0\le \tau(h_\text{\rmF}) \le 1.
\]
\end{thm}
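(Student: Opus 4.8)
The plan is to read off both bounds from the factored form $\tau(h_\text{F}) = \<E_0|C_h C_h^\dagger|E_0\>$ recorded in the second line of \eqref{tau}, using one extra observation: each $\Pii_j$ is itself an orthogonal projection on $\H_U$, being the conjugate $\e^{iHt_j/\hbar}(\Pi_j\ox 1)\e^{-iHt_j/\hbar}$ of the projection $\Pi_j\ox 1$ by the unitary $\e^{-iHt_j/\hbar}$; hence $\Pii_j^\dagger = \Pii_j$ and $\Pii_j^2 = \Pii_j$. Granting this, the lower bound is immediate:
\[
\tau(h_\text{F}) = \<E_0|C_h C_h^\dagger|E_0\> = \big\<C_h^\dagger E_0 \,\big|\, C_h^\dagger E_0\big\> = \big\|\,C_h^\dagger|E_0\>\,\big\|^2 \ge 0 .
\]

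For the upper bound I would prove, by induction on the length $n$, the slightly more general claim that for any vector $|\psi\>$ and any orthogonal projections $\Pii_1,\dots,\Pii_n$ one has
\[
\<\psi|\,\Pii_1\cdots\Pii_{n-1}\Pii_n\Pii_{n-1}\cdots\Pii_1\,|\psi\> \le \<\psi|\psi\> .
\]
The case $n=0$ is a trivial equality and the case $n=1$ is precisely part (i) of the preceding Lemma. For the inductive step I would factor out the outermost projector and set $|\phi\> = \Pii_1|\psi\>$: the left-hand side equals $\<\phi|\,\Pii_2\cdots\Pii_n\cdots\Pii_2\,|\phi\>$, which by the induction hypothesis is at most $\<\phi|\phi\> = \<\psi|\Pii_1^2|\psi\> = \<\psi|\Pii_1|\psi\>$, and this is at most $\<\psi|\psi\>$ by part (i) of the Lemma again.

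Applying this with $|\psi\> = |E_0\>$ and comparing with the first line of \eqref{tau} yields $\tau(h_\text{F}) \le \<E_0|E_0\>$. Since $|E_0\> = |\eta_0\>|\psi'_0(t_0)\>$ is a single summand of the unit state vector $|\Psi(t_0)\>$ of \eqref{universe}, we have $\<E_0|E_0\>\le 1$ (with equality under the normalisation in which $|E_0\>$ is itself taken to be a unit vector), so $\tau(h_\text{F})\le 1$, completing the proof.

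I do not expect any genuine obstacle; the argument is routine. The only point needing a word of care is the normalisation of $|E_0\>$: it need not be a unit vector, merely a component of one, which is exactly why the Lemma is stated with $\<\psi|\psi\>$ on the right rather than $1$, and the displayed chain handles both conventions uniformly. (A quicker alternative for the upper bound is simply that an orthogonal projection is a contraction, hence so is $C_h^\dagger = \Pii_n\cdots\Pii_1$, whence $\big\|\,C_h^\dagger|E_0\>\,\big\| \le \big\|\,|E_0\>\,\big\| \le 1$; I would nonetheless present the inductive version, since it invokes only the Lemma just established.)
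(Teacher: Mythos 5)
Your proof is correct and is essentially the paper's own argument written out in full: the paper's proof is simply ``by repeated application of Lemma 1(i), using $\<E_0|E_0\> = 1$,'' which is exactly your induction peeling off the outermost projector $\Pii_1$ and applying Lemma 1(i) at each step. Your extra care over the normalisation of $|E_0\>$ is harmless but unnecessary, since the paper takes $\<E_0|E_0\> = 1$ as given.
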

\begin{proof}
By repeated application of Lemma 1(i), using $\<E_0|E_0\> = 1$. 
\end{proof}

\begin{thm} For any two future histories $h_1, h_2$,
\[
\tau(h_1\land h_2) = 1 \;\iff\; \tau(h_1) = \tau(h_2) = 1.
\]
\end{thm}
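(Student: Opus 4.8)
The plan is to reduce the theorem to a single structural fact about one history and then settle the conjunction by bookkeeping on projections. The fact is: for a future history $h = F_{t_1}(\Pi_1)\wedge\cdots\wedge F_{t_n}(\Pi_n)$,
\[
\tau(h) = 1 \quad\iff\quad \Pii_k|E_0\> = |E_0\> \text{ for every } k.
\]
Granting this, both directions follow, since a conjunction of histories is again a history whose projection factors are — up to multiplying together the ones falling at coinciding times — the union of the factors of $h_1$ and of $h_2$.

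To prove the fact I would first note that each $\Pii_k = \e^{iHt_k/\hbar}(\Pi_k\ox 1)\e^{-iHt_k/\hbar}$ is again an orthogonal projection, being a projection conjugated by a unitary. Setting $|v_0\> = |E_0\>$ and $|v_k\> = \Pii_k|v_{k-1}\>$, formula \eqref{tau} gives $\tau(h) = \<E_0|C_hC_h^\dagger|E_0\> = \||v_n\>\|^2$, and Lemma 1(i), applied one projection at a time, gives the chain $\||v_n\>\| \le \||v_{n-1}\>\| \le \cdots \le \||v_0\>\| = 1$. If $\tau(h) = 1$ then every inequality here is an equality, so $\|\Pii_k|v_{k-1}\>\| = \||v_{k-1}\>\|$ for each $k$; by Lemma 1 this forces $\Pii_k|v_{k-1}\> = |v_{k-1}\>$, whence by induction $|v_k\> = |v_0\> = |E_0\>$ and $\Pii_k|E_0\> = |E_0\>$ for all $k$. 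Conversely, if every $\Pii_k$ fixes $|E_0\>$ then $C_h^\dagger|E_0\> = |E_0\>$ and $\tau(h) = 1$; taking $n=1$ this also records that $\tau(F_t(\Pi)) = 1 \iff \Pii|E_0\> = |E_0\>$.

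Next I would assemble $h_1\wedge h_2$. Since $\TF$ is distributive and each $F_t$ is a homomorphism, $h_1\wedge h_2$ is a history $F_{u_1}(Q_1)\wedge\cdots\wedge F_{u_m}(Q_m)$ with $u_1<\cdots<u_m$, where $Q_i$ is the projection already present when $u_i$ belongs to only one of $h_1,h_2$, and $Q_i = \Pi\wedge\Pi' = \Pi\Pi'$ when $u_i$ is shared (the last equality because $\E$ is Boolean, so the two projections commute); the remark following \eqref{historyop} certifies that \eqref{tau} still computes $\tau(h_1\wedge h_2)$. For the forward implication, if $\tau(h_1\wedge h_2) = 1$ then by the structural fact $\widetilde{Q}_i|E_0\> = |E_0\>$ for each $i$; at a shared time $\widetilde{Q}_i = \Pii\,\Pii'$ is a product of commuting orthogonal projections, hence projects onto $\operatorname{ran}\Pii\cap\operatorname{ran}\Pii'$, so $\widetilde{Q}_i|E_0\> = |E_0\>$ yields $\Pii|E_0\> = |E_0\>$ and $\Pii'|E_0\> = |E_0\>$ separately. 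Thus every conjugated projection of $h_1$ and every one of $h_2$ fixes $|E_0\>$, and the structural fact read in reverse gives $\tau(h_1) = \tau(h_2) = 1$. For the reverse implication, apply the structural fact to $h_1$ and to $h_2$ to see that each of their conjugated projections fixes $|E_0\>$; then each $\widetilde{Q}_i$, being a product of such, fixes $|E_0\>$, so $C_{h_1\wedge h_2}^\dagger|E_0\> = |E_0\>$ and $\tau(h_1\wedge h_2) = 1$.

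The step I expect to demand the most care is the coinciding-times bookkeeping in the last paragraph: that $h_1\wedge h_2$ genuinely has the form \eqref{history}, that a shared time contributes the \emph{operator product} of the two projections, and that ``$\widetilde{Q}_i$ fixes $|E_0\>$'' splits into the two separate fixed-vector conditions. All three points rest on $\E$ being Boolean — so that projections at a common time commute — together with the consistency of \eqref{tau} at equal times already checked in the text; with those in hand, the argument is just the telescoping norm inequality of the second paragraph applied twice.
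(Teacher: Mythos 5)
Your proof is correct and follows essentially the same route as the paper: the paper's argument is an induction on the number of one-time conjuncts using Lemma 1(ii), which is exactly your telescoping norm chain, and both reduce the theorem to the condition that every conjugated projection $\Pii_k$ fixes $|E_0\>$. The only difference is the bookkeeping at coinciding times, which the paper absorbs by allowing $t_1\le\cdots\le t_n$ in its preliminary claim while you merge the two commuting projections into a product and split again.
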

\begin{proof}
First note that if $k_1,\ldots k_n$ are one-time histories $k_n = F_{t_n}(\Pi_n)$ with $t_1\le \cdots \le t_n$,
\[
\tau(k_1\land \ldots \land k_n) = 1 \;\iff\; \tau(k_1) = \cdots = \tau(k_n).
\]
This is proved by induction on $n$, using Lemma 1(ii). Now if $h_1$ and $h_2$ are any two future histories, $\tau(h_1\land h_2)$, $\tau(h_1)$ and $\tau(h_2)$ are all conjunctions of one-time histories, so both sides of the equivalence in the theorem are equivalent to $\tau(k) = 1$ for all one-time histories occurring in $h_1$ and $h_2$.
\end{proof}
{\bf Corollary}
$\quad\tau(h_1\land h_2) \neq 1 \;\impl\; \tau(h_1)\neq 1\; \text{ or }\; \tau(h_2) \neq 1$.

\begin{thm} For one-time histories $h_1,h_2$ with $t_1 < t_2$,
\[
\tau(h_1) = 0 \; \impl\; \tau(h_1\land h_2) = 0.
\]
\end{thm}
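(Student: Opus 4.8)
The plan is to read both quantities off the history formula \eqref{tau}. Write $h_1 = F_{t_1}(\Pi_1)$ and $h_2 = F_{t_2}(\Pi_2)$. Since $\Pi_1,\Pi_2$ are projections, so are their unitary conjugates $\Pii_1,\Pii_2$; in particular each $\Pii_k$ is self-adjoint and idempotent, so $\Pii_k = \Pii_k^\dagger\Pii_k$. The one-time case of \eqref{tau} then gives
\[
\tau(h_1) = \<E_0|\Pii_1|E_0\> = \<E_0|\Pii_1^\dagger\Pii_1|E_0\>,
\]
the squared length of the vector $\Pii_1|E_0\>$, so that $\tau(h_1)=0$ holds exactly when $\Pii_1|E_0\> = 0$. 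Since $t_1<t_2$, the history operator \eqref{historyop} of $h_1\land h_2$ is $C_h = \Pii_1\Pii_2$, and \eqref{tau} reads
\[
\tau(h_1\land h_2) = \<E_0|\Pii_1\Pii_2\Pii_1|E_0\>.
\]
The theorem now follows at once: if $\tau(h_1)=0$, then $\Pii_1|E_0\>=0$, and substituting this into the last display annihilates it, so $\tau(h_1\land h_2)=0$.

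In fact the same computation yields a little more. Writing $\<E_0|\Pii_1\Pii_2\Pii_1|E_0\> = \<E_0|\Pii_1\Pii_2^\dagger\Pii_2\Pii_1|E_0\>$ exhibits $\tau(h_1\land h_2)$ as the squared length of $\Pii_2\Pii_1|E_0\>$, which cannot exceed that of $\Pii_1|E_0\>$ because the projection $\Pii_2$ does not increase lengths; hence $\tau(h_1\land h_2)\le\tau(h_1)$ for all one-time $h_1,h_2$ with $t_1<t_2$, and the theorem is the case $\tau(h_1)=0$. (If one would rather read $\tau(h_1)$ off \eqref{eq:onetimetau} as $|\<E_0|\Pii_1|E_0\>|^2$ instead of off \eqref{tau}, nothing changes, since that vanishes precisely when $\<E_0|\Pii_1|E_0\>$ does.)

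I do not anticipate a genuine obstacle here: the content is merely that the \emph{earliest} projector of the sandwiched product in \eqref{tau} sits on the outside, so that once it kills $|E_0\>$ the whole quadratic form collapses. The one step that is not pure bookkeeping is the passage from the scalar equation $\tau(h_1)=0$ to the vector equation $\Pii_1|E_0\>=0$, which rests on the positivity of $\Pii_1$ --- the lower-bound half of Lemma 1(i), with its (unstated, but identically proved) equality case. It is worth flagging that the hypothesis $t_1<t_2$ is genuinely used: it is what makes $\Pii_1$, rather than $\Pii_2$, the outermost operator. The symmetric-looking implication $\tau(h_2)=0\impl\tau(h_1\land h_2)=0$ is \emph{not} obtained this way, and should not hold in general: it would amount to $\tau(h_1\land h_2)\le\tau(h_2)$, one of the classical relations \eqref{ineq:prob} that are weakened in the present logic (cf.\ Theorem \ref{truth}), whereas $\tau(h_1\land h_2)\le\tau(h_1)$, as just shown, survives.
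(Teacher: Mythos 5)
Your argument is correct and is essentially the paper's own proof, which is the one-line chain $\tau(h_1)=0 \impl \Pii_1|E_0\> = 0 \impl \tau(h_1\land h_2)=0$, resting on exactly the observation you flag: positivity of $\Pii_1$ upgrades the scalar vanishing to the vector equation, and the hypothesis $t_1<t_2$ puts $\Pii_1$ outermost in $\<E_0|\Pii_1\Pii_2\Pii_1|E_0\>$. Your closing remarks also match the paper: the bound $\tau(h_1\land h_2)\le\tau(h_1)$ is its Corollary to Theorem 4, and the failure of the time-reversed implication is discussed there in the same terms.
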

\begin{proof}
\[
\tau(h_1) = 0 \;\impl\; \Pii_1|E_0\> = 0 \;\impl\; \tau(h_1\land h_2) = 0.
\]
\end{proof}

\begin{thm} For one-time histories $h_1,h_2$ with $t_1<t_2$,
\[
\tau(h_1\land h_2) + \tau(h_1\land\lnot h_2) = \tau(h_1)
\]
\end{thm}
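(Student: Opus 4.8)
The plan is to expand each term on the left using the history formula \eqref{tau}, observe that the two ``inner'' projectors add up to the identity, and collapse what is left down to $\tau(h_1)$.

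First I would deal with the negation. Since $F_{t_2}$ is assumed to be a lattice homomorphism, $\lnot h_2 = F_{t_2}(\lnot\Pi_2)$, where $\lnot\Pi_2$ is the element of $\E$ given by the orthogonal complement of the range of $\Pi_2$; the projection onto that subspace is $1-\Pi_2$. Conjugation by the unitary $\e^{-iHt_2/\hbar}$ is a linear map which fixes the identity operator, so the projector attached to $\lnot h_2$ at time $t_2$ is
\[
\widetilde{(1-\Pi_2)} = \e^{iHt_2/\hbar}\big((1-\Pi_2)\otimes 1\big)\e^{-iHt_2/\hbar} = 1 - \Pii_2 .
\]
In particular $1-\Pii_2$ is again a projection, so $h_1\land\lnot h_2 = F_{t_1}(\Pi_1)\land F_{t_2}(\lnot\Pi_2)$ is a genuine length-two history of the form \eqref{history} (as is $h_1\land h_2$, using $t_1<t_2$).

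Next I would apply \eqref{tau} to both of these histories, with history operators $C = \Pii_1\Pii_2$ and $C' = \Pii_1(1-\Pii_2)$ respectively, obtaining
\[
\tau(h_1\land h_2) = \<E_0|\Pii_1\Pii_2\Pii_1|E_0\>, \qquad \tau(h_1\land\lnot h_2) = \<E_0|\Pii_1(1-\Pii_2)\Pii_1|E_0\> .
\]
Adding, the bracketed operators combine as $\Pii_2 + (1-\Pii_2) = 1$, leaving $\<E_0|\Pii_1^2|E_0\> = \<E_0|\Pii_1|E_0\>$ because $\Pii_1$ is a projection; and $\<E_0|\Pii_1|E_0\>$ is precisely $\tau(h_1)$, by \eqref{tau} applied to the length-one history $h_1$ (whose history operator is $\Pii_1$). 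This finishes the argument.

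There is no real obstacle here; the one place that needs care is the identity $\widetilde{(1-\Pi_2)} = 1-\Pii_2$, which rests on the homomorphism assumption on $F_{t_2}$ together with the fact that conjugation by a unitary is a unital linear map. It is also worth noting where the hypothesis $t_1<t_2$ enters: it puts $\Pii_2$ in the innermost slot of the history operator, sandwiched between the two copies of $\Pii_1$, which is exactly what makes the sum over $h_2$ and $\lnot h_2$ collapse; marginalising over the \emph{earlier} proposition instead would leave $\Pii_1\Pii_2\Pii_1 + (1-\Pii_1)\Pii_2(1-\Pii_1)$, which does not simplify.
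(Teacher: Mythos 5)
Your proof is correct and is essentially the paper's own argument: both expand $\tau(h_1\land h_2)$ and $\tau(h_1\land\lnot h_2)$ via the history formula as $\langle E_0|\Pii_1\Pii_2\Pii_1|E_0\rangle$ and $\langle E_0|\Pii_1(1-\Pii_2)\Pii_1|E_0\rangle$ and sum them to get $\langle E_0|\Pii_1|E_0\rangle = \tau(h_1)$. Your extra remarks --- justifying $\widetilde{(1-\Pi_2)} = 1-\Pii_2$ and noting that the hypothesis $t_1<t_2$ is what places $\Pii_2$ in the innermost slot so the sum collapses --- are accurate and consistent with the paper's subsequent discussion of the asymmetry between the conjuncts.
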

\begin{proof}
\begin{align*}
\tau(h_1\land h_2) + \tau(h_1\land\lnot h_2) &= \<E_0|\Pii_1\Pii_2\Pii_1|E_0\> +  \<E_0|\Pii_1\(1 - \Pii_2\)\Pii_1|E_0\>\\
&= \<E_0|\Pii_1|E_0\> = \tau(h_1).
\end{align*}
\end{proof}
{\bf Corollary} $\tau(h_1\land h_2) \leq \tau(h_1)$.

\linespace

Theorems 2, 3 and 4 show that the truth values of two-time histories (conjunctions of one-time propositions) have some of the properties that we would expect for the truth and falsity of conjunction. However, the strict falsity of a conjunction (as opposed to its lack of truth) does not imply the strict falsity of one of the conjuncts. This is not surprising, given the probabilistic nature of the truth values. The quantum nature of the truth values becomes apparent in Theorems 3 and 4
which display a lack of symmetry between the conjuncts. The falsity of a proposition $h_2$ at the later time $t_2$ does not imply the falsity of the conjunction $h_1\land h_2$, because the interposition of a fact $h_1$ at $t_1$ affects the truth of $h_2$ at $t_2$. However, this quantum effect should not be visible at the level of our experience. In order to restore the symmetry of conjunction, to make it possible to extend some theorems which would otherwise only apply to one-time histories, and to complete the logic generally, we need the following \emph{Consistent Histories} assumption:

{\bf CH}\hspace{0.5em} Let $h = F_{t_1}(\Pi_1)\land \cdots \land F_{t_n}(\Pi_n)$ be any history in the lattice of temporal propositions, and for any binary sequence $\alpha = (\alpha_1,\ldots ,\alpha_n)$ ($\alpha_n = 0$ or $1$), let
\[
h_\alpha = F_{t_1}(\Pi_1^{\alpha_1})\land \ldots \land F_{t_n}(\Pi_n^{\alpha_n}).
\]
where $\Pi^0 = \Pi$, $\Pi^1 = \lnot\Pi = 1 - \Pi$. Then
\[
\<E_0|C_{h_\alpha} C_{h_\beta}^\dagger|E_0\> = 0 \;\text{ if }\; \alpha \neq \beta.
\]
where $C_h$ is the history operator of \eqref{historyop}.

This is part of a much wider assumption that demarcates the admissible histories in the ``consistent histories" formulation of quantum mechanics \cite{Griffiths:book}, and can be justified for macroscopic states like our experience states by decoherence theory \cite{Wallace:multiverse}. We do not need the full strength of the consistent histories formulation.

\begin{lemma}\label{PiiPii} If CH holds, the truth value of a history $h = F_{t_1}(\Pi_1)\land\ldots\land F_{t_n}(\Pi_n)$ is given by
\be\label{eqPiiPii}
\tau(h) = \<E_0|\Pii_1\ldots \Pii_n|E_0\>.
\ee
\end{lemma}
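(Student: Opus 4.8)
The plan is to derive \eqref{eqPiiPii} from the definition \eqref{tau}, $\tau(h) = \<E_0|C_h C_h^\dagger|E_0\>$, by using CH to replace the factor $C_h^\dagger$ by the identity inside the matrix element. The one ingredient needed beyond CH is a completeness relation for history operators. For each $j$ the operator $C_{h_\alpha}$ carries in its $j$-th slot the factor $\Pii_j$ when $\alpha_j = 0$ and $1 - \Pii_j$ when $\alpha_j = 1$; since the slot at position $j$ depends only on $\alpha_j$ and the $n$ slots are multiplied in a fixed order, the sum over all $2^n$ binary sequences $\alpha$ factorises into a product of identities,
\[
\sum_\alpha C_{h_\alpha} = \prod_{j=1}^n\bigl(\Pii_j + (1-\Pii_j)\bigr) = 1,
\]
and taking adjoints gives $\sum_\alpha C_{h_\alpha}^\dagger = 1$ as well.

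Writing the history $h$ of the lemma as the all-zeros alternative $h_{\mathbf 0}$ built from its own projectors $\Pi_1,\dots,\Pi_n$ (so that $C_h = \Pii_1\cdots\Pii_n$), I would then compute
\[
\<E_0|\Pii_1\cdots\Pii_n|E_0\> = \<E_0|C_h|E_0\> = \sum_\alpha \<E_0|C_h\, C_{h_\alpha}^\dagger|E_0\>,
\]
and invoke CH, which says precisely that every term with $\alpha \neq \mathbf 0$ vanishes. Only the $\alpha = \mathbf 0$ term survives, and it equals $\<E_0|C_h C_h^\dagger|E_0\> = \tau(h)$ by \eqref{tau}. This proves \eqref{eqPiiPii}; note in passing that it simultaneously shows the a priori complex quantity $\<E_0|\Pii_1\cdots\Pii_n|E_0\>$ is in fact real and equal to the probability $\tau(h)$, which is the real force of the assumption CH.

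There is no serious obstacle here — the mathematical content is entirely carried by CH, and the argument is the familiar observation that a diagonal decoherence functional has diagonal entries equal to the ``naive'' linear expressions. The only things requiring a little care are bookkeeping: identifying $h$ with $h_{\mathbf 0}$, and, if one wishes to permit coincident times $t_j = t_{j+1}$, recalling (as in the paragraph following \eqref{tau}) that the projectors $\Pi_j,\Pi_{j+1}$ then commute, so that $C_h$ and the completeness relation above are unaffected.
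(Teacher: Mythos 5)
Your proof is correct, and it reaches the conclusion by a genuinely cleaner route than the paper's. Both arguments rest on exactly the same instances of CH, namely $\<E_0|C_{h_{\mathbf 0}}C_{h_\alpha}^\dagger|E_0\> = 0$ for $\alpha\neq\mathbf 0$; the difference is how the surviving diagonal term is extracted. The paper never writes down the completeness relation $\sum_\alpha C_{h_\alpha}=1$. Instead it proves, by downward induction on subsets $R=\{i_1,\ldots,i_r\}\subseteq\{1,\ldots,n\}$, the intermediate identities $\tau(h) = \<E_0|\Pii_1\cdots\Pii_n\Pii_{i_r}\cdots\Pii_{i_1}|E_0\>$, the inductive step being an inclusion--exclusion cancellation $\sum_{k\neq 0}(-1)^{k+1}\binom{n-r}{k}=1$; the case $R=\emptyset$ is the lemma. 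Your resolution-of-the-identity argument collapses that induction into the single chain $\<E_0|C_h|E_0\> = \sum_\alpha\<E_0|C_h C_{h_\alpha}^\dagger|E_0\> = \<E_0|C_h C_h^\dagger|E_0\> = \tau(h)$, and it buys two things: the observation that CH forces $\<E_0|\Pii_1\cdots\Pii_n|E_0\>$ to be real, and (if one resolves the identity only over a chosen set of slots) the paper's intermediate identities as free by-products. The one step worth stating explicitly is the factorisation $\sum_\alpha C_{h_\alpha} = \prod_{j}\bigl(\Pii_j + (1-\Pii_j)\bigr)$, which uses only that the $j$-th factor of $C_{h_\alpha}$ depends on $\alpha_j$ alone and that the factors are multiplied in a fixed order, so no commutativity of the $\Pii_j$ is needed; you note this, and it is right.
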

\begin{proof} By CH, for each non-zero $(\alpha_1,\ldots , \alpha_n) \in \{0,1\}^n$ we have
\be\label{CH}
\<E_0|\Pii_1\cdots\Pii_n\Pii_n^{\alpha_n}\cdots\Pii_1^{\alpha_1}|E_0\> = 0.
\ee
Taking $\alpha_n = \delta_{ir}$ for some $r$ gives
\be\label{onetimemissing}
\tau(h) = \<E_0|\Pii_1\cdots\Pii_n\Pii_n\cdots ]\Pii_r[\cdots \cdots \Pii_n|E_0\>
\ee
where $]\Pii_r[$ denotes that $\Pii_r$ is omitted from the product. We now prove by downward induction on $r$ that for any subset $R = \{i_1,\ldots ,i_r\}$ of $\{1,\ldots ,n\}$,
\be\label{induction}
\tau(h) = \<E_0|\Pii_1\cdots\Pii_n\Pii_{i_r}\ldots\Pii_{i_1}|E_0\>.
\ee
Taking $\alpha$ so that $\alpha_n = 0$ if $i\in R$, $\alpha_n = 1$ if $i\notin R$, \eqref{CH} gives the right-hand side of \eqref{induction} as a sum of terms with $r+k$ factors to the right of $\Pii_n$, with $\binom{n-r}{k}$ terms if $k>0$, all multiplied by $(-1)^{k+1}$, and all equal to $\tau(h)$ by the inductive hypothesis. This sum is
\[
\sum_{k\neq0}(-1)^{k+1}\binom{n-r}{k}\tau(h) = \tau(h).
\]
 \end{proof}  

We can now restore symmetry to Theorems 3 and 4 and extend them to multi-time histories.

\begin{thm} If CH holds,
\[
\tau(h_1) = 0  \; \impl \; \tau(h_1\land h_2) = 0
\]
for any two histories $h_1$ and $h_2$.
\end{thm}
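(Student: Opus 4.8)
The plan is to obtain this (in fact the sharper monotonicity $\tau(h_1\land h_2)\le\tau(h_1)$) from Theorem 1 and CH, by exhibiting $h_1\land h_2$ as one cell of a decomposition of $h_1$ into pairwise disjoint refinements whose history operators send $|E_0\>$ to mutually orthogonal vectors. First I would put $h_1$ and $h_1\land h_2$ on a common time grid $v_1<\cdots<v_L$ (the union of their two time sets); write $A_j$ for the projector of $h_1$ at $v_j$ (with $A_j=1$ at times occurring only in $h_2$) and $B_j$ for that of $h_1\land h_2$. Since $\E$ is Boolean all these commute, and $B_j$ is $A_j$ times the relevant projectors of $h_2$, so $B_j\le A_j$; put $B_j'=A_j-B_j$, again a projector, orthogonal to $B_j$, with $B_j+B_j'=A_j$.

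For a binary string $\vec\epsilon=(\epsilon_1,\dots,\epsilon_L)$ let $h^{(\vec\epsilon)}$ be the history on this grid whose projector at $v_j$ is $B_j$ when $\epsilon_j=0$ and $B_j'$ when $\epsilon_j=1$. Three routine facts: (i) $h^{(\vec 0)}=h_1\land h_2$; (ii) distributivity in $\T$ together with the homomorphism property of the $F_t$ gives $\bigvee_{\vec\epsilon}h^{(\vec\epsilon)}=\bigwedge_j F_{v_j}(B_j\lor B_j')=\bigwedge_j F_{v_j}(A_j)=h_1$ (using $F_{v_j}(1)=\top$ at the padded slots and $B_j\lor B_j'=A_j$ in $\E$); (iii) for the history operators, $\sum_{\vec\epsilon}C_{h^{(\vec\epsilon)}}=\prod_j(\widetilde{B_j}+\widetilde{B_j'})=\prod_j\widetilde{A_j}=C_{h_1}$, the products taken over $j$ in increasing order for all the $h^{(\vec\epsilon)}$ and for $h_1$ alike, whence $C_{h_1}^\dagger|E_0\>=\sum_{\vec\epsilon}C_{h^{(\vec\epsilon)}}^\dagger|E_0\>$.

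Now I would invoke CH, which says exactly that refining a history slot-by-slot into orthogonal alternatives produces history operators $C$ whose vectors $C^\dagger|E_0\>$ are mutually orthogonal. Applied to the refinement $\{B_j,B_j'\}$ of $A_j$ at each $v_j$, this makes the $C_{h^{(\vec\epsilon)}}^\dagger|E_0\>$ pairwise orthogonal, so by Pythagoras and $\tau(h)=\<E_0|C_hC_h^\dagger|E_0\>=\|C_h^\dagger|E_0\>\|^2$,
\[
0=\tau(h_1)=\Big\|\sum_{\vec\epsilon}C_{h^{(\vec\epsilon)}}^\dagger|E_0\>\Big\|^2=\sum_{\vec\epsilon}\tau\big(h^{(\vec\epsilon)}\big)\ \ge\ \tau\big(h^{(\vec 0)}\big)=\tau(h_1\land h_2)\ \ge\ 0 ,
\]
the last inequality by Theorem 1; hence $\tau(h_1\land h_2)=0$.

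The step needing care, which I expect to be the main obstacle, is the appeal to CH: it is stated only for the binary refinement $\{\Pi,1-\Pi\}$ of each slot, whereas $\{B_j,B_j'\}$ is a refinement of $A_j$, and at a time shared by $h_1$ and $h_2$ one really wants the finer refinement generated by $A_j$ and the relevant $h_2$-projector $D_j$. I would bridge this by the standard device of ``doubling'' such a slot --- replacing $F_{v_j}(A_j)$ by $F_{v_j}(A_j)\land F_{v_j}(D_j)$, harmless because these commute and already sanctioned by the paper's remark that equal times cause no trouble --- and then applying binary CH to the doubled history, whose two binary refinements at $v_j$ combine to give exactly the refinement required. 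An alternative that stays strictly binary is an induction on the length of $h_2$ that adjoins its times one at a time: for a time not already present this reduces, via binary CH, to the additivity $\tau(p\land\psi)+\tau(p\land\lnot\psi)=\tau(p)$ of Theorem 4, and only a coincidence with a time of $h_1$ calls for the doubling device.
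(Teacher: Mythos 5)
Your argument is correct in substance but follows a genuinely different route from the paper's. The paper first takes $h_2=F_{t_r}(\Pi_r)$ to be a single one-time factor inserted into $h_1$ and quotes the CH consequence \eqref{onetimemissing}, which expresses $\tau(h_1\land h_2)$ as $\<E_0|C_{h_1\land h_2}\,C_{h_1}^{\dagger}|E_0\>$; since $\tau(h_1)=\|C_{h_1}^{\dagger}|E_0\>\|^2=0$ forces $C_{h_1}^{\dagger}|E_0\>=0$, this vanishes, and a general $h_2$ is then handled by conjoining its one-time factors one at a time. Your orthogonal-decomposition argument instead proves the stronger monotonicity $\tau(h_1\land h_2)\le\tau(h_1)$ (the paper's Theorem 7) in one stroke, with no induction on $h_2$: the operator identity $\sum_{\vec\epsilon}C_{h^{(\vec\epsilon)}}=C_{h_1}$ and the Pythagorean chain are sound. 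What the paper's route buys is that CH is only ever invoked for a single history with its literal binary refinements $\{\Pi,1-\Pi\}$ at strictly increasing times, whereas your refinement $\{B_j,\,A_j-B_j\}$ of $A_j$ needs exactly the extension you flag: the doubling device requires CH to hold for histories with repeated time slots (the paper's remark about $t_1=t_2$ covers only the formula \eqref{tau}, not the CH hypothesis), so it is a natural but genuine strengthening of the stated assumption. Your fallback induction is rigorous and is essentially how the paper itself passes from Theorem 6 to Theorem 7; note only that the additivity it needs for a multi-time first argument is Theorem 6 rather than Theorem 4 (Theorem 6 follows from Lemma \ref{PiiPii} alone, so there is no circularity in using it here), and that a time of $h_2$ coinciding with a time of $h_1$ is an unexamined corner in the paper's own proof as well.
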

\begin{proof}
First take $h_2 = F_{t_r}(\Pi_r)$ to be a one-time history and $h_1$ to be given by 
\[
h_1 = F_{t_1}(\Pi_1)\land\cdots\land F_{t_{r-1}}(\Pi_{r-1})\land F_{t_{r+1}}(\Pi_{r+1})\land\cdots \land F_{t_n}(\Pi_n).
\]
If CH holds, $\tau(h_1\land h_2)$ is given by \eqref{onetimemissing}. But if $\tau(h_1) = 0$,
\[
\Pii_n\cdots ]\Pii_r[ \cdots \Pii_1|E_0\> = 0
\]
so that $\tau(h_1\land h_2) = 0$.

Now any $h_2$ can be written as a conjunction of one-time histories, say $h_2 = h_{21}\land\cdots\land h_{2k}$, so
\begin{align*}
\tau(h_1) = 0\; \impl\; \tau(h_1\land h_{21}) = 0\; \impl \cdots &\impl \tau(h_1\land h_{21}\land\cdots\land h_{2k}) = 0\\
 &\impl \; \tau(h_1\land h_2) = 0.
\end{align*}
\end{proof}

\begin{thm}\label{hnoth} If CH holds, and $h_2$ is a one-time history,
\[
\tau(h_1\land h_2) + \tau(h_1\land\lnot h_2) = \tau (h_1).
\]
\end{thm}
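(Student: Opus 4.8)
The plan is to reduce the identity to the ``linear'' expression for the truth value of a history furnished by Lemma \ref{PiiPii}. This is exactly what CH buys us: unlike the symmetric formula \eqref{tau}, the expression \eqref{eqPiiPii} is linear in each transported projector $\Pii_r$, so the two terms on the left telescope once $h_2$ and $\lnot h_2$ are split off.

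First I would fix notation. Write $h_1 = F_{t_1}(\Pi_1)\land\cdots\land F_{t_n}(\Pi_n)$ with $t_1<\cdots<t_n$, and $h_2 = F_{t_r}(\Pi_r)$; since $F_{t_r}$ is a homomorphism, $\lnot h_2 = F_{t_r}(1-\Pi_r)$. If $t_r$ coincides with some $t_i$, then $F_{t_i}(\Pi_i)\land h_2 = F_{t_i}(\Pi_i\Pi_r)$ and $F_{t_i}(\Pi_i)\land\lnot h_2 = F_{t_i}(\Pi_i(1-\Pi_r))$, and since $\Pi_i\Pi_r + \Pi_i(1-\Pi_r) = \Pi_i$ the argument below goes through with $n$ replaced by $n$ and one transported projector split; so it is enough to treat the case where $t_r$ is distinct from all the $t_i$. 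In that case let $A$ be the time-ordered product of those $\Pii_i$ with $t_i<t_r$ and $B$ the time-ordered product of those with $t_i>t_r$ (either product being $1$ if vacuous), so that the time-ordered product of the transported projectors of $h_1$ is $AB$, that of $h_1\land h_2$ is $A\Pii_r B$, and that of $h_1\land\lnot h_2$ is $A(1-\Pii_r)B$.

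Next I would apply Lemma \ref{PiiPii} to each of the three histories $h_1$, $h_1\land h_2$ and $h_1\land\lnot h_2$ — all genuine histories in the lattice, so CH applies — to obtain
\begin{align*}
\tau(h_1\land h_2) + \tau(h_1\land\lnot h_2) &= \<E_0|A\Pii_r B|E_0\> + \<E_0|A(1-\Pii_r)B|E_0\>\\
&= \<E_0|AB|E_0\> = \tau(h_1),
\end{align*}
which is the assertion.

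There is no substantial obstacle here once Lemma \ref{PiiPii} is available; the only points needing care are the bookkeeping of where $t_r$ sits among the times of $h_1$ (including possible coincidence), and the observation that it is genuinely the linear formula \eqref{eqPiiPii}, not the symmetric one \eqref{tau}, that makes the two terms collapse: with \eqref{tau} one would be left with $\Pii_r BB^\dagger\Pii_r + (1-\Pii_r)BB^\dagger(1-\Pii_r)$, which does not reduce to $BB^\dagger$. This is precisely the reason Lemma \ref{PiiPii} was established first.
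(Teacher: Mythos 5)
Your proof is correct and takes essentially the same route as the paper: the paper's own proof simply takes $h_1$ and $h_2$ as in Theorem 5 (so $\lnot h_2 = F_{t_r}(1-\Pi_r)$) and observes that the identity follows immediately from the linear formula of Lemma \ref{PiiPii}. Your explicit telescoping $A\Pii_r B + A(1-\Pii_r)B = AB$ just spells out that ``immediately'', with the coincident-time case as a minor extra.
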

Note that in general $\lnot h_1$ is a disjunction of histories, so $\tau(\lnot h_1\land h_2$) is not yet defined.
\begin{proof}
Suppose that $h_1$ and $h_2$ are as in the proof of Theorem 5, so that 
$\lnot h_2 = F_{t_r}(1 - \Pi_r)$. Then the result follows immediately from Lemma \ref{PiiPii}.
\end{proof}

\begin{thm} If CH holds,
\[
\tau(h_1\land h_2) \le \tau(h_1)
\]
for any two future histories $h_1, h_2$.
\end{thm}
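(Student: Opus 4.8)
The plan is to reduce the whole statement to one application of the expanded Born formula of Lemma~\ref{PiiPii} together with the positivity asserted in Theorem 1. First I would put the two histories on a common, strictly increasing list of times: writing $t_1 < \cdots < t_n$ for the union of the times occurring in $h_1$ and $h_2$, and inserting trivial factors $F_{t_i}(1)$ where a time is missing from one of them (these have Heisenberg operator $1$, so by Lemma~\ref{PiiPii} they affect none of the truth values involved), I may assume
\[
h_1 = \bigwedge_{i=1}^n F_{t_i}(P_i), \qquad h_2 = \bigwedge_{i=1}^n F_{t_i}(Q_i),
\]
with all $P_i, Q_i$ projections in $\E$. Since each $F_{t_i}$ is a lattice homomorphism and any two projections in the Boolean lattice $\E$ commute, $F_{t_i}(P_i)\land F_{t_i}(Q_i) = F_{t_i}(P_iQ_i)$, so $h_1\land h_2 = \bigwedge_i F_{t_i}(P_iQ_i)$.

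Next I would decompose coordinate by coordinate. For each $i$ we have the orthogonal sum $P_i = P_iQ_i + P_i(1-Q_i)$, and both summands lie in $\E$ (they are $P_i\land Q_i$ and $P_i\land\lnot Q_i$). Passing to Heisenberg projections at time $t_i$ is linear, so $\Pii_i = \Pii_i^{(0)} + \Pii_i^{(1)}$ where $\Pii_i^{(0)},\Pii_i^{(1)}$ correspond to $P_iQ_i$ and $P_i(1-Q_i)$. By CH and Lemma~\ref{PiiPii}, $\tau(h_1) = \<E_0|\Pii_1\cdots\Pii_n|E_0\>$; substituting and expanding the product (the order of the factors is preserved, so no commutativity between different times is needed) gives
\[
\tau(h_1) = \sum_{\alpha\in\{0,1\}^n} \<E_0|\Pii_1^{(\alpha_1)}\cdots\Pii_n^{(\alpha_n)}|E_0\>.
\]
Each summand is $\tau(g_\alpha)$, where $g_\alpha$ is the history carrying $P_iQ_i$ at $t_i$ if $\alpha_i = 0$ and $P_i(1-Q_i)$ if $\alpha_i = 1$; these are bona fide histories, so the (universally quantified) assumption CH and Lemma~\ref{PiiPii} apply to each of them. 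By Theorem 1 every $\tau(g_\alpha)\ge 0$, and the term $\alpha = (0,\ldots,0)$ is exactly $\tau(h_1\land h_2)$, whence $\tau(h_1) = \tau(h_1\land h_2) + \sum_{\alpha\neq 0}\tau(g_\alpha) \ge \tau(h_1\land h_2)$.

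The algebra of commuting projections and the expansion of the product are routine; the one point that needs care — and the place where the \emph{blanket} form of CH is genuinely used — is the observation that every factor produced by the expansion is itself an admissible history, so that Lemma~\ref{PiiPii} (and with it the positivity of Theorem 1) may be invoked term by term rather than only for $h_1$ and $h_2$. An alternative, more bookkeeping-heavy route is induction on the number of times in $h_2$: the base case is Theorem~\ref{hnoth} when the extra time is new to $h_1$ (with the $P = PQ + P(1-Q)$ splitting to handle the case when it is not), and the inductive step peels off one one-time conjunct of $h_2$ at a time; I would prefer the one-shot expansion above.
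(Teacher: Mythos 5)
Your proof is correct, but it is not the paper's route. The paper disposes of the theorem in two lines: the case of one-time $h_2$ is Theorem~\ref{hnoth} (the identity $\tau(h_1\land h_2)+\tau(h_1\land\lnot h_2)=\tau(h_1)$) plus nonnegativity, and the general case follows by induction on the length of $h_2$ via $\tau(h_1\land h_2\land h_3)\le\tau(h_1\land h_2)\le\tau(h_1)$ --- essentially the ``bookkeeping-heavy'' alternative you mention and set aside, except that it is actually the shorter option because Theorem~\ref{hnoth} is already in hand. Your one-shot argument is a genuine alternative and buys something the induction does not: padding to a common time grid and resolving each $\Pii_i$ as $\Pii_i^{(0)}+\Pii_i^{(1)}$ exhibits the defect $\tau(h_1)-\tau(h_1\land h_2)$ explicitly as a sum $\sum_{\alpha\neq 0}\tau(g_\alpha)$ of truth values of histories, not merely as a nonnegative quantity. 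The price is the one you correctly flag: you must invoke CH, through Lemma~\ref{PiiPii}, for the whole refined family $g_\alpha$ built from the products $P_iQ_i$ and $P_i(1-Q_i)$ (without CH for $g_\alpha$ the one-sided matrix element $\<E_0|\Pii_1^{(\alpha_1)}\cdots\Pii_n^{(\alpha_n)}|E_0\>$ need not even be real), whereas the paper's induction only ever applies Lemma~\ref{PiiPii} to histories formed from the original projections and their complements $1-\Pi$. Since the paper states CH as a blanket assumption over all histories in the lattice and $\E$ is Boolean (so $P_iQ_i\in\E$), this is legitimate, but when $h_1$ and $h_2$ share times your argument consumes strictly more consistency conditions than the paper's. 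Both proofs ultimately rest on the same two facts, Lemma~\ref{PiiPii} and the positivity of the symmetric Born expression, so neither is more elementary; yours trades the induction for a heavier appeal to CH in exchange for an explicit positive decomposition.
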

\begin{proof}
By Theorem 6, the inequality holds if $h_2$ is a one-time history. Now we argue by induction on the length of $h_2$, using the associativity of $\land$. If $h_3$ is a one-time history and the inequality holds for $h_1$ and $h_2$,
\[
\tau(h_1\land h_2\land h_3) \le \tau(h_1\land h_2) \le \tau(h_1)
\]
so the inequality holds for $h_1$ and $h_2\land h_3$.
\end{proof} 

\begin{thm}\label{tauineq} If CH holds,
\[
\tau(h_1) + \tau(h_2) - 1 \le \tau(h_1\land h_2)
\]
for any two future histories $h_1$ and $h_2$.
\end{thm}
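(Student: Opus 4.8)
The plan is to reduce everything to repeated use of Theorem~\ref{hnoth}, rather than to imitate the length induction used in the previous two theorems: that induction does not close here, because chaining the length‑one case produces terms that cannot be bounded in the required direction. Write $h_2 = A_1\land\cdots\land A_n$ with $A_j = F_{t_j}(\Pi_j)$ and $t_1<\cdots<t_n$. Since each $F_{t_j}$ is a lattice homomorphism, $\lnot A_j = F_{t_j}(1-\Pi_j)$ is again a one‑time history, and each partial product $A_1\land\cdots\land A_{i-1}\land\lnot A_i$ is an ordinary future history, so Theorem~\ref{hnoth} and the preceding theorem (that $\tau(g\land k)\le\tau(g)$) apply to all of them.

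The first step is to establish the telescoping identity
\[
\tau(h_1\land h_2)\;=\;\tau(h_1)-\sum_{i=1}^{n}\tau\bigl(h_1\land A_1\land\cdots\land A_{i-1}\land\lnot A_i\bigr),
\]
with the $i=1$ summand read as $\tau(h_1\land\lnot A_1)$. This comes from applying Theorem~\ref{hnoth} to peel $h_2$ apart one factor at a time: first with the one‑time history $A_n$ and with $h_1\land A_1\land\cdots\land A_{n-1}$ in the role of the first argument, then with $A_{n-1}$ and $h_1\land A_1\land\cdots\land A_{n-2}$, and so on down to $A_1$. Carrying out the identical computation with the always‑true proposition (for which $\tau=1$) in place of $h_1$ gives the companion identity
\[
1-\tau(h_2)\;=\;\sum_{i=1}^{n}\tau\bigl(A_1\land\cdots\land A_{i-1}\land\lnot A_i\bigr).
\]
Subtracting the two, $\tau(h_1)-\tau(h_1\land h_2)=\sum_i\tau\bigl(h_1\land A_1\land\cdots\land A_{i-1}\land\lnot A_i\bigr)$; bounding each summand by $\tau\bigl(A_1\land\cdots\land A_{i-1}\land\lnot A_i\bigr)$ using the preceding theorem together with the commutativity of $\land$ then gives $\tau(h_1)-\tau(h_1\land h_2)\le 1-\tau(h_2)$, which is exactly the assertion $\tau(h_1)+\tau(h_2)-1\le\tau(h_1\land h_2)$.

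The main difficulty is the conceptual one of recognising that a telescope, and not a length induction, does the work. The one technical point that needs a remark is that Theorem~\ref{hnoth} is here being invoked with its one‑time factor $A_j$ possibly inserted at a position among times already constrained by its first argument; this is harmless, because Theorem~\ref{hnoth} rests on Lemma~\ref{PiiPii}, which holds for every history once \textbf{CH} is assumed, and the cancellation it relies on is insensitive to whether the time of $A_j$ coincides with one already present in the first argument. Hence the same proof delivers the version of Theorem~\ref{hnoth} used above.
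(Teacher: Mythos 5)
Your proof is correct, but it is genuinely different from the paper's. The paper first handles the case where $h_1$ and $h_2$ are both one-time histories by an explicit operator computation: using Lemma~\ref{PiiPii} it writes $1-\tau(h_1)-\tau(h_2)+\tau(h_1\land h_2)=\<E_0|(1-\Pii_2)(1-\Pii_1)|E_0\>=\tau(\lnot h_1\land\lnot h_2)\ge 0$, and then extends to arbitrary lengths by a double induction whose inductive step combines Theorem~\ref{hnoth} (to rewrite $\tau(h_2\land h_3)$ as $\tau(h_2)-\tau(h_2\land\lnot h_3)$) with the monotonicity theorem. You instead iterate Theorem~\ref{hnoth} to obtain the exact telescoping identities $\tau(h_1)-\tau(h_1\land h_2)=\sum_i\tau(h_1\land A_1\land\cdots\land A_{i-1}\land\lnot A_i)$ and $1-\tau(h_2)=\sum_i\tau(A_1\land\cdots\land A_{i-1}\land\lnot A_i)$, and compare them termwise via monotonicity and commutativity of $\land$. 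This checks out: every partial conjunction appearing is a genuine future history, Theorem~\ref{hnoth} is indeed stated and proved for a one-time factor interleaved anywhere among the times of the other argument (its proof goes back to the setup of Theorem 5, where $t_r$ sits in an arbitrary position), and the $i=1$ term of the companion identity is just $\tau(\lnot A_1)=1-\tau(A_1)$. Your route avoids the operator-level base case and the double induction entirely, and it yields the inequality as a termwise comparison of two explicit nonnegative expansions, which is arguably more transparent; what it loses is the identity $1-\tau(h_1)-\tau(h_2)+\tau(h_1\land h_2)=\tau(\lnot h_1\land\lnot h_2)$, which the paper reuses in spirit when proving the upper bound on $\tau(h_1\lor\cdots\lor h_n)$.

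One small correction to your framing: the length induction does close in the paper, precisely because the inductive step does not naively chain the two--one-time-history inequality (which, as you say, would only give $\tau(h_1)+\tau(A_1)+\tau(A_2)-2$ and the like) but first applies Theorem~\ref{hnoth} and then monotonicity. So your telescope is an alternative to a workable argument, not a repair of a broken one.
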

\begin{proof}
First suppose that $h_1$ and $h_2$ are one-time histories. By Lemma \ref{PiiPii}, CH gives
\be\label{twopi}
\tau(h_1\land h_2) = \<E_0|\Pii_1\Pii_2\Pii_1|E_0\> = \<E_0|\Pii_2\Pii_1|\E_0\>.
\ee
Hence\upline
\begin{align*}
1 - \tau(h_1) - \tau(h_2) &+ \tau(h_1\land h_2) = \<E_0|(1 - \Pii_1 - \Pii_2 + \Pii_2\Pii_1)|E_0\> \\
&= \<E_0|(1-\Pii_2)(1-\Pii_1)|E_0\>\\
&= \tau(\lnot h_1\land\lnot h_2) \quad\text{ by Lemma \ref{PiiPii} again}\\
&\ge 0.
\end{align*}
Now we proceed by double induction on the lengths of $h_1$ and $h_2$: if $h_3$ is a one-time history, and the inequality holds for $h_1$ and $h_2$, then
\begin{align*}
\tau(h_1) + {} &\tau(h_2\land h_3) - 1 = \tau(h_1) + \tau(h_2) - \tau(h_2\land \lnot h_3) - 1 \quad\text{ by Theorem 8}\\
&\le \tau(h_1\land h_2) - \tau(h_2\land\lnot h_3) \;\quad\quad\quad\quad \text{ by the inductive hypothesis}\\
&= \tau(h_1\land h_2\land h_3) + \tau(h_1\land h_2\land \lnot h_3) - \tau(h_1\land \lnot h_3) \text{ by Theorem 8}\\
&\le \tau(h_1\land h_2\land h_3) \qqquad\qqquad\qqquad\qqquad\qqquad\text{ by Theorem 7}
\end{align*}
so the inequality holds for $h_1$ and $h_2\land h_3$. Hence, by induction it holds for one-time $h_1$ and any $h_2$. Now a similar induction on $h_1$ shows that it holds for all $h_1,h_2$.
\end{proof} 

\subsection{Disjunction}
\label{subsec:disjunction}

If it is clear what is meant by a future history and how to assign its truth value in quantum theory (though whether this really is clear will be discussed in the next section), it is not so clear how to approach a disjunction of histories. However, with the assumption CH, which will be a standing assumption for the remainder of this section, we can, as anticipated in section \ref{probtruth} (eq. \eqref{eq:prob}),  adopt the following definition from probability logic:
\be\label{truthdisj}
\tau(h_1\lor h_2) = \tau(h_1) + \tau(h_2) - \tau(h_1\land h_2)
\ee
since Theorems 7 and 8 assure us that this lies between 0 and 1.

We extend this to disjunctions of any finite number of future histories, i.e.\ to general elements of the lattice $\TF$, by the following definition, expressing the principle of inclusion and exclusion:
\be\label{taulor}
\tau(h_1\lor\cdots\lor h_n) = H^{(n)}_1 - H^{(n)}_2 + \cdots + (-)^{n-1}H^{(n)}_n
\ee
where
\be\label{H}
H^{(n)}_r = \sum\tau(h_{i_1}\land\cdots\land h_{i_r})
\ee
in which the sum extends over all $r$-subsets $\{i_1,\ldots,i_r\}$ of $\{1,\ldots ,n\}$.
This satisfies 
\be\label{disjinduction}
\tau(h_1\lor\cdots\lor h_n) = \tau(h_1\lor\cdots\lor h_{n-1}) + \tau(h_n) - \tau\big( (h_1\lor\cdots\lor h_{n-1})\land h_n\big)
\ee
where, by the assumed distributivity of the lattice $\TF$, the argument of the last $\tau$ can be expanded into a disjunction of $n-1$ histories. However, we do not take it as an inductive definition of $\tau(h_1\lor\cdots\lor h_n)$ since it would be necessary to show that it is well-defined, i.e.\ independent of the ordering of $h_1,\ldots,h_n$. We must show that the definition \eqref{taulor} gives a result between $0$ and $1$, and for this we need to extend Theorem \ref{hnoth}:
\begin{thm}\label{genhnoth}
If $p$ is any disjunction of histories and $h$ is a one-time history, then
\[
\tau(p) = \tau(p\land h) + \tau(p\land\lnot h).
\]
\end{thm}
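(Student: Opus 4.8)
The plan is to reduce the statement to Theorem~\ref{hnoth} (the one-time case) by expanding all three quantities $\tau(p)$, $\tau(p\land h)$ and $\tau(p\land\lnot h)$ through the inclusion--exclusion formula \eqref{taulor} and matching the sums term by term. First I would fix a representation $p = h_1\lor\cdots\lor h_n$ as a disjunction of histories. Since $\land$ distributes over $\lor$ in $\TF$, we have $p\land h = (h_1\land h)\lor\cdots\lor(h_n\land h)$ and, because $\lnot h = F_{t_r}(1-\Pi_r)$ is again a one-time history (exactly as used in the proof of Theorem~\ref{hnoth}), also $p\land\lnot h = (h_1\land\lnot h)\lor\cdots\lor(h_n\land\lnot h)$. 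Each $h_i\land h$ and each $h_i\land\lnot h$ is a history, so \eqref{taulor} applies to all three disjunctions.

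Next I would record the purely lattice-theoretic identity that for any subset $S\subseteq\{1,\ldots,n\}$, idempotency, commutativity and associativity of $\land$ give
\[
\bigwedge_{i\in S}(h_i\land h) = \Big(\bigwedge_{i\in S}h_i\Big)\land h ,
\]
and likewise with $\lnot h$ in place of $h$. Applying \eqref{taulor} to $p\land h$ and to $p\land\lnot h$ and summing, the two resulting inclusion--exclusion sums run over the same family of subsets, so I would obtain
\[
\tau(p\land h)+\tau(p\land\lnot h)=\sum_{r=1}^{n}(-1)^{r-1}\!\!\sum_{|S|=r}\Big[\tau\big((\textstyle\bigwedge_{i\in S}h_i)\land h\big)+\tau\big((\textstyle\bigwedge_{i\in S}h_i)\land\lnot h\big)\Big].
\]

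Now each bracketed term collapses: $\bigwedge_{i\in S}h_i$ is again a history (a conjunction of histories, reordered in time, is a history since each $F_t$ is a homomorphism), and $h$ is a one-time history, so Theorem~\ref{hnoth} gives $\tau\big((\bigwedge_{i\in S}h_i)\land h\big)+\tau\big((\bigwedge_{i\in S}h_i)\land\lnot h\big)=\tau\big(\bigwedge_{i\in S}h_i\big)$. Substituting this turns the right-hand side into $\sum_{r}(-1)^{r-1}\sum_{|S|=r}\tau(\bigwedge_{i\in S}h_i)$, which is precisely $\tau(p)$ by \eqref{taulor}, completing the argument.

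The one place needing care — and the step I would flag as the main obstacle — is the bookkeeping that licenses the term-by-term matching: one must check that distributing $\land h$ and $\land\lnot h$ through $p$ yields disjunctions indexed by exactly the subsets appearing in the expansion of $\tau(p)$, and in particular that the $r$-fold conjunction of the $h_i\land h$ equals $(\bigwedge_{i\in S}h_i)\land h$ rather than something with $h$ repeated nontrivially. Once idempotency of $\land$ is invoked this is immediate, but it is exactly the point where the homomorphism assumption on $F_t$ from \S\ref{subsec:lattice} is doing work, and it is worth making explicit.
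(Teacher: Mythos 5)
Your argument is correct, but it takes a genuinely different route from the paper's. The paper proves Theorem \ref{genhnoth} by induction on the number of disjuncts in $p$: writing $p = p'\lor k$ with $k$ a single history, one applies the recursion \eqref{disjinduction} to $p\land h$ and to $p\land\lnot h$, then uses the inductive hypothesis for $p'$ and for $p'\land k$ (which, by distributivity, is again a disjunction of $n-1$ histories) together with Theorem \ref{hnoth} for the single history $k$, and the cross terms cancel. You instead expand $\tau(p)$, $\tau(p\land h)$ and $\tau(p\land\lnot h)$ directly from the inclusion--exclusion definition \eqref{taulor} and collapse each pair of terms with Theorem \ref{hnoth} applied to the history $\bigwedge_{i\in S}h_i$. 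The mathematical content is the same --- both proofs bottom out in Theorem \ref{hnoth} for single histories --- but your version works straight from the definition \eqref{taulor} rather than through the derived recursion \eqref{disjinduction}, which the paper explicitly declines to adopt as a definition because its independence of the ordering of the disjuncts would have to be checked; the induction, on the other hand, is shorter to write down. The caveats you flag at the end (that $\lnot h = F_{t_r}(1-\Pi_r)$ is again a one-time history, that $\bigwedge_{i\in S}(h_i\land h) = (\bigwedge_{i\in S}h_i)\land h$ by idempotency, and that a conjunction of histories is again a history via the homomorphism property of the $F_t$) are exactly the points that make the term-by-term matching legitimate, so your proof is complete as it stands.
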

\begin{proof}
This is a straightforward induction on the number of disjuncts in $p$.
\end{proof}

\begin{thm} If $\tau(h_1\lor\cdots\lor h_n)$ is given by \eqref{taulor}, then
\[
0 \le \tau(h_1\lor\cdots\lor h_n) \le 1.
\]
\end{thm}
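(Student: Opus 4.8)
My plan is to prove the sharper statement that, on the finite distributive sublattice of $\TF$ generated by $h_1,\dots,h_n$, the function $\tau$ is an ordinary finitely additive probability: the value on $h_1\lor\cdots\lor h_n$ that \eqref{taulor} computes by inclusion--exclusion will turn out to be a sum of a subset of a genuine probability distribution, hence in $[0,1]$. Concretely, let $u_1<\dots<u_N$ be the times occurring in any $h_i$, and let $\Pi^{(i)}_k$ be the projection asserted by $h_i$ at $u_k$ (the identity $1$ if $h_i$ is silent there), so that, $F_{u_k}$ being a homomorphism, $h_i=\bigwedge_k F_{u_k}(\Pi^{(i)}_k)$. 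Since $\E$ is Boolean, the projectors $\Pi^{(1)}_k,\dots,\Pi^{(n)}_k$ commute; let $\{Q^{(k)}_a:a\in A_k\}$ be the atoms of the finite Boolean algebra they generate --- mutually orthogonal projections with $\sum_a Q^{(k)}_a=1$ and $\Pi^{(i)}_k=\sum_{a\in A^{(i)}_k}Q^{(k)}_a$ for a subset $A^{(i)}_k\subseteq A_k$. For $\mathbf a=(a_1,\dots,a_N)\in\prod_k A_k$ put $g_{\mathbf a}=\bigwedge_k F_{u_k}(Q^{(k)}_{a_k})$. As CH is assumed for every history, Lemma~\ref{PiiPii} gives $\tau(g_{\mathbf a})=\<E_0|\widetilde{Q}^{(1)}_{a_1}\cdots\widetilde{Q}^{(N)}_{a_N}|E_0\>$, with each $Q^{(k)}_a$ evolved by the time $u_k$ as in the definition of $\Pii$; summing over $a_N$ (using $\sum_a\widetilde{Q}^{(k)}_a=1$), then over $a_{N-1}$, and so on down, gives $\sum_{\mathbf a}\tau(g_{\mathbf a})=\<E_0|E_0\>=1$, while each $\tau(g_{\mathbf a})\ge0$ by Theorem~1.

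Next I evaluate $\tau$ on the meets occurring in \eqref{taulor}. For $\emptyset\ne I\subseteq\{1,\dots,n\}$ one has $\bigwedge_{i\in I}h_i=\bigwedge_k F_{u_k}\big(\prod_{i\in I}\Pi^{(i)}_k\big)$ and $\prod_{i\in I}\Pi^{(i)}_k=\sum_{a\in\bigcap_{i\in I}A^{(i)}_k}Q^{(k)}_a$, so expanding each factor in Lemma~\ref{PiiPii} accordingly yields $\tau\big(\bigwedge_{i\in I}h_i\big)=\sum_{\mathbf a\in S_I}\tau(g_{\mathbf a})$, where $S_I=\prod_k\bigcap_{i\in I}A^{(i)}_k$, and $\mathbf a\in S_I$ exactly when $I\subseteq J(\mathbf a):=\{i:\mathbf a\in S_{\{i\}}\}$. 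Substituting into \eqref{taulor} and interchanging the two summations,
\[
\tau(h_1\lor\cdots\lor h_n)=\sum_{\mathbf a}\tau(g_{\mathbf a})\sum_{\emptyset\ne I\subseteq J(\mathbf a)}(-1)^{|I|-1},
\]
and the inner sum is $1$ when $J(\mathbf a)\ne\emptyset$ and $0$ (empty sum) otherwise. Hence $\tau(h_1\lor\cdots\lor h_n)=\sum_{\mathbf a:\,J(\mathbf a)\ne\emptyset}\tau(g_{\mathbf a})$, a partial sum of the nonnegative numbers $\tau(g_{\mathbf a})$ whose total is $1$, so $0\le\tau(h_1\lor\cdots\lor h_n)\le1$.

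The substance is all in the refinement: I would need to check carefully that the $\Pi^{(i)}_k$ commute (so the cells $Q^{(k)}_a$ are honest orthogonal projections, distinct atomic histories are mutually exclusive, and $\tau$ is additive over them), and that CH entitles us to apply Lemma~\ref{PiiPii} not just to the $h_i$ but to every meet $\bigwedge_{i\in I}h_i$ and every $g_{\mathbf a}$; once the additive structure of $\tau$ over the atoms is in place, the cancellation in \eqref{taulor} is the purely combinatorial identity $\sum_{\emptyset\ne I\subseteq J}(-1)^{|I|-1}=1$. One could instead try to run an induction on $n$ via \eqref{disjinduction} and Theorem~\ref{genhnoth}, but closing such an induction requires exactly the monotonicity and Bonferroni bounds $\tau(p\land k)\le\min\{\tau(p),\tau(k)\}$ and $\tau(p\land k)\ge\tau(p)+\tau(k)-1$ for $p$ a disjunction of fewer than $n$ histories, and these fall out of the refinement argument at one stroke.
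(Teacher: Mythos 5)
Your argument is correct, but it follows a genuinely different route from the paper's. The paper gets the lower bound from \eqref{disjinduction} together with an extension of Theorem \ref{genhnoth} (showing $\tau(h_1\lor\cdots\lor h_{n-1})\ge\tau\big((h_1\lor\cdots\lor h_{n-1})\land h_n\big)$, whence $\tau(h_1\lor\cdots\lor h_n)\ge\tau(h_n)\ge 0$), and gets the upper bound first for one-time histories --- where Lemma \ref{PiiPii} collapses the inclusion--exclusion sum to $1-\tau(\lnot h_1\land\cdots\land\lnot h_n)$ --- and then extends to multi-time histories by nested inductions on their lengths in the style of Theorem \ref{tauineq}, an extension the paper only sketches. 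You instead pass to a common refinement: under CH, $\tau$ restricted to the finite sublattice generated by $h_1,\ldots,h_n$ is exhibited as the pushforward of a genuine probability distribution on the atomic histories $g_{\mathbf a}$, after which \eqref{taulor} reduces to the combinatorial identity $\sum_{\emptyset\ne I\subseteq J}(-1)^{|I|-1}=1$. The two points you flag are real but benign in this setting: the same-time projections $\Pi^{(i)}_k$ commute because $\E$ is by hypothesis a Boolean sublattice of the projection lattice, so the atoms $Q^{(k)}_a$ lie in $\E$, every $g_{\mathbf a}$ and every meet $\bigwedge_{i\in I}h_i$ is a bona fide history of $\TF$, and the blanket assumption CH (stated for ``any history in the lattice'') licenses Lemma \ref{PiiPii} for all of them; inserting identity projections at silent times changes nothing since $\Pii=1$ there. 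What your approach buys, beyond filling in the paper's unwritten nested inductions, is exactly what you note at the end: monotonicity, the bound of Theorem \ref{tauineq}, and in essence Theorem \ref{disj} itself all drop out of finite additivity over the atoms, whereas the paper establishes each by a separate induction.
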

\begin{proof}
This follows the same lines as the proof of Theorem \ref{tauineq}. If $h_n$ is a one-time history, Theorem \ref{genhnoth} gives
\[
\tau(h_1\lor\cdots\lor h_{n-1}) \ge \tau\big((h_1\lor\cdots\lor h_{n-1})\land h_n\big)
\]
and arguing by induction on the length of $h_n$, as in Theorem 7, we can extend this to general $h_n$. Hence, by \eqref{disjinduction},
\[
\tau(h_1\lor\cdots\lor h_n) \ge \tau(h_n) \ge 0.
\]

For the right-hand inequality, first suppose that $h_1,\ldots,h_n$ are one-time histories, with $\tau(h_n) = \<E_0|\Pii_n|E_0\>$; then in the definition \eqref{taulor} we have, from Lemma \ref{PiiPii},
\[
H_r^{(n)} = \sum \<E_0|\Pii_{i_1}\ldots \Pii_{i_n}|E_0\>
\]
and therefore
\begin{align*}
\tau(h_1\lor\cdots\lor h_n) &= 1 - \<E_0|(1 - \Pii_n)(1 - \Pii_{n-1})\cdots(1-\Pii_1)|E_0\>\\
&= 1 - \tau(\lnot h_1\land\cdots\land\lnot h_n)
\end{align*}.
Since $\lnot h_1\land\cdots\land\lnot h_n$ is a single history, this lies between $0$ and $1$.

The right-hand inequality can be extended to arbitrary histories $h_1,\ldots,h_n$ by a similar argument to the last part of the proof of Theorem \ref{tauineq}, using successive inductions on the lengths of $h_1,\ldots,h_n$.
\end{proof}

We can show that \eqref{truthdisj} is true in general. This is the replacement for the truth table for $\lor$ in truth-functional many-valued logic.

\begin{thm}\label{disj}
If $p$ and $q$ are any two future propositions (elements of the lattice $\TF$),
\[
\tau(p\lor q) = \tau(p) + \tau(q) - \tau(p\land q).
\]
\end{thm}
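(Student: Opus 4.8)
The plan is to show that, restricted to $\TF$, the function $\tau$ is a finitely additive measure on a Boolean algebra of pairwise mutually exclusive \emph{atomic histories}; once that is in hand the asserted identity is just the set-theoretic relation $\mathbf{1}_{A\cup B}+\mathbf{1}_{A\cap B}=\mathbf{1}_A+\mathbf{1}_B$ for two subsets $A,B$ of that algebra.

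First I would put $p$ and $q$ over a common frame. Write $p=h_1\lor\cdots\lor h_m$ and $q=h_{m+1}\lor\cdots\lor h_{m+k}$ as disjunctions of histories, and pad each $h_i$ with trivial factors $F_t(1)$ (the image under $F_t$ of the top element of $\E$) so that they all become histories over one fixed set of times $t_1<\cdots<t_n$; this leaves every truth value unchanged, since inserting $\Pii=1$ into the products in \eqref{tau} or \eqref{eqPiiPii} does nothing. At each time $t_\ell$ let $e_\ell^{(1)},\dots,e_\ell^{(d_\ell)}$ be the atoms of the finite Boolean subalgebra of $\E$ generated by the projections occurring at $t_\ell$ in the $h_i$; these are mutually orthogonal and sum to $1$. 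Because each $F_{t_\ell}$ is a lattice homomorphism and $\land$ distributes over $\lor$ in $\TF$, every $h_i$ becomes the disjunction $\bigvee_{a\in A_i}a$ of the \emph{atomic histories} $a=F_{t_1}(e_1^{(j_1)})\land\cdots\land F_{t_n}(e_n^{(j_n)})$ lying below it, for a suitable index set $A_i$. Two distinct atomic histories carry orthogonal projections at some time, so their conjunction contains a factor $F_{t_\ell}(0)$ and has truth value $0$ by \eqref{tau}; hence the atomic histories are pairwise mutually exclusive, and on families of them $\land$ and $\lor$ act simply as intersection and union of index sets.

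The heart of the argument is a \emph{master formula}: for any subfamily $k_1=h_{i_1},\dots,k_s=h_{i_s}$ with index sets $B_1,\dots,B_s$,
\[
\tau(k_1\lor\cdots\lor k_s)=\sum_{a\in B_1\cup\cdots\cup B_s}\tau(a).
\]
I would prove it in two moves. (a) For a single history $h$ with index set $A_h$, $\tau(h)=\sum_{a\in A_h}\tau(a)$: by Lemma \ref{PiiPii}, $\tau(h)=\<E_0|\Pii_1\cdots\Pii_n|E_0\>$ and $\tau(a)=\<E_0|\widetilde e_1^{(j_1)}\cdots\widetilde e_n^{(j_n)}|E_0\>$ for $a\in A_h$, and summing over $a\in A_h$ while using $\Pi_\ell=\sum_{e\le\Pi_\ell}e$ at each time collapses the sum back to $\<E_0|\Pii_1\cdots\Pii_n|E_0\>$. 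A conjunction of histories over the common frame is again such a history, so likewise $\tau\big(\bigwedge_{u\in S}k_u\big)=\sum_{a\in\bigcap_{u\in S}B_u}\tau(a)$ for every $S$. (b) Feeding these into the defining inclusion--exclusion \eqref{taulor} for $\tau(k_1\lor\cdots\lor k_s)$ and interchanging the two summations, the coefficient of a given $\tau(a)$ becomes $\sum_{\emptyset\ne S\subseteq T_a}(-1)^{|S|-1}$ with $T_a=\{u:a\in B_u\}$; this equals $1$ when $T_a\ne\emptyset$ and $0$ otherwise, which is exactly the indicator of $B_1\cup\cdots\cup B_s$. (Incidentally this also shows $\tau$ is well defined on $\TF$, independent of the chosen expression as a disjunction of histories.)

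Then the theorem follows by applying the master formula four times. Let $A=A_1\cup\cdots\cup A_m$ and $B=A_{m+1}\cup\cdots\cup A_{m+k}$ be the index sets of $p$ and $q$, so that $p\lor q$ has index set $A\cup B$ and $p\land q$ has index set $A\cap B$ (distinct atomic histories meeting in $\bot$); then
\begin{align*}
\tau(p\lor q)+\tau(p\land q)&=\sum_{a\in A\cup B}\tau(a)+\sum_{a\in A\cap B}\tau(a)\\
&=\sum_{a\in A}\tau(a)+\sum_{a\in B}\tau(a)=\tau(p)+\tau(q).
\end{align*}
The step I expect to be the main obstacle is move (a) of the master formula: the naive sum $\sum_{a\le h}\tau(a)$ of Born-rule probabilities does \emph{not} collapse to $\tau(h)$, because of the cross terms in the sandwiched expression \eqref{tau}; it is precisely the consistency condition CH, entering through Lemma \ref{PiiPii}, that removes those cross terms and makes $\tau$ additive on mutually exclusive histories.
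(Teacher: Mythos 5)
Your argument is correct, but it is not the route the paper takes. The paper's proof of Theorem \ref{disj} never refines to atoms: it works directly with the inclusion--exclusion definition \eqref{taulor}, encoding the formal alternating sums as elements $1-\bigcup_i(1-\{i\})$ of a commutative ring $U(X)$ of $\Z$-linear combinations of subsets with union as multiplication, and showing that the expression for $\tau(p\land q)$ is carried onto $\tau(p)+\tau(q)-\tau(p\lor q)$ by the ring homomorphism induced by projecting $M\times N$ onto $M\sqcup N$. The only inputs there are the definition \eqref{taulor}, distributivity, and the idempotence and commutativity of $\land$; CH is not invoked inside that proof at all (it was needed earlier only to guarantee the values lie in $[0,1]$). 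Your proof instead uses CH essentially, through Lemma \ref{PiiPii}: padding to a common time frame, splitting each projection into the atoms of the Boolean subalgebra it generates at each time, and using the multilinearity of $\<E_0|\Pii_1\cdots\Pii_n|E_0\>$ together with $\Pi_\ell=\sum_{e\le\Pi_\ell}e$ to show that $\tau$ is finitely additive over the mutually exclusive atomic histories --- your step (a) is exactly where CH is indispensable, as you note, since the sandwiched form \eqref{tau} would produce cross terms. What your approach buys is a stronger conclusion: $\tau$ is exhibited as a genuine finitely additive measure on $\TF$, which yields the well-definedness of \eqref{taulor} under change of presentation (a point the paper leaves implicit) and recovers Theorems 7--10 as corollaries; what it costs is generality, since the paper's formal argument would hold for any history functional extended by \eqref{taulor}, whereas yours is tied to the quantum formula plus CH. One small point to make explicit if you write this up: CH must be assumed for the refined atomic histories, not just for the histories originally appearing in $p$ and $q$; this is covered by the paper's blanket statement of CH for ``any history in the lattice,'' but it is a genuine strengthening of what is needed for the original conjuncts alone.
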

\begin{proof}
Suppose $p = h_1\lor\cdots\lor h_m$ and $q = k_1\lor\cdots\lor k_n$ where $h_n$ and $k_j$ are histories. The definition \eqref{taulor} gives expressions for $\tau(p)$, $\tau(q)$, $\tau(p\lor q)$ and $\tau(p\land q)$ which can be rewritten in a factorised form as follows.

For any set $X$, we write $2^X$ for the set of subsets of $X$, as usual. Let $U(X)$ be the commutative ring generated by the subsets of $X$ with union as the ring multiplication (in other words, $U(X)$ is a $\Z$-module with basis $2^X$, with the additional operation of union defined on the basis and extended by $\Z$-linearity). We continue to use the symbol $\cup$ for this extended operation. Then the empty set is a multiplicative identity in $U(X)$, and we will denote it by $1$.  

 Let $M = \{1,\ldots,m\}$ and $N = \{1,\ldots, n\}$. Let $M\sqcup N$ be the disjoint union of $M$ and $N$, realised as
\[
M\sqcup N = \{(1,0),\ldots,(m,0),(0,1),\ldots,(0,n)\} \subset \Z\times\Z
\]
(so $M\sqcup N$ is the union of the projections of $M\times N$ onto the axes in $\Z\times\Z$). Define real-valued functions $\tau_p: 2^M \to [0,1]$, $\tau_q: 2^N\to [0,1]$, $\tau_{p\lor q}:2^{M\sqcup N} \to [0,1]$ and $\tau_{p\land q}:2^{M\times N} \to [0,1]$ by
\begin{align*}
\tau_p(\{i_1,\ldots,i_r\}) &= \tau(h_{i_1}\lor\cdots\lor h_{i_r}),\\
\tau_q(\{j_1,\ldots,j_s\}) &= \tau(k_{j_1}\lor\cdots\lor k_{j_s}),\\
\tau_{p\lor q}\big(\{(i_1,0)\ldots,(i_r,0),(0,j_1),\ldots,(0,j_s)\}\big) &= \tau\big(h_{i_1}\lor\cdots\lor h_{i_r}\lor k_{j_1}\lor\cdots\lor k_{j_s}\big),\\
\tau_{p\land q}\big(\{(i_1,j_1),\ldots,(i_t,j_t)\}\big) &= \tau\big(h_{i_1}\lor k_{j_1}\lor\cdots\lor h_{i_t}\lor k_{j_t}\big).
\end{align*}
Then the definition \eqref{taulor} can be stated as 
\[
\tau(p) = \tau_p\big(1 - (1 - \{1\})\cup\cdots\cup(1 - \{m\})\big)
\]
with corresponding expressions for $\tau(q)$, $\tau(p\lor q)$ and $\tau(p\land q)$. To write the last two, it is convenient to introduce the abbreviations $e_n = \{(i,0)\}$, $f_j = \{(0,j)\}$  and $g_{ij} = \{(i,j)\}$ for the singleton sets in $2^{M\sqcup N}$ and $2^{M\times N}$, which generate the rings $U(M\sqcup N)$ and $U(M\times N)$. Then 
\begin{align*}
\tau(p\lor q) &= \tau_{p\lor q}\big(1 - (1 - e_1)\cup\ldots\cup(1-e_m)\cup(1-f_1)\cup\cdots\cup(i-f_n)\big),\\
\tau(p\land q) &= \tau_{p\lor q}\(1 - \bigcup_{ij}(1 - g_{ij})\).
\end{align*}

Define a map $f:2^{M\times N} \to 2^{M\sqcup N}$ by projecting $M\times N \subset \Z\times\Z$ onto the axes: 
\begin{align*}
f\big(\{(i_1,j_1),\ldots (i_r,j_r)\}\big) &= \{i_1,\ldots,i_r\}\sqcup\{j_1,\ldots,j_r\}\\
&=\{(i_1,0),\ldots,(i_r,0),(0,j_1),\ldots,(0,j_r)\}.
\end{align*}
This satisfies $f(S\cup T) = f(S)\cup f(T)$ and therefore can be extended to a ring homomorphism $f:U(M\times N) \to U(M\sqcup N)$. It is completely specified by its action on the generators: 
\be\label{gen}
f(g_{ij}) = e_n\cup f_j.
\ee

Now let $S = \{(i_1,j_1),\ldots ,(i_r,j_r)\}$ be any subset of $M\times N$. Then
\begin{align*}
\tau_{p\land q}(S)
&= \tau\(h_{i_1}\lor k_{j_1}\lor\cdots\lor h_{i_r}\lor k_{j_r}\)\\
&= \tau\(h_{i_1}\lor\cdots\lor h_{i_r}\lor k_{j_1}\lor\cdots \lor k_{j_r}\)\\
&= \tau_{p\lor q}\(e_{i_1}\cup\cdots\cup e_{i_r}\cup f_{i_1}\cup\cdots\cup f_{j_r}\)\\
&= \tau_{p\lor q}(f(S)),
\end{align*}
so $\tau_{p\land q} = \tau_{p\lor q}\circ f$. Hence
\begin{align*}
\tau(p\land q) &= \tau_{p\land q}\(1 - \bigcup_{ij}(1 - g_{ij})\)
= \tau_{p\lor q}\circ f\(1 - \bigcup_{ij}(1 - g_{ij})\)\\
&= \tau_{p\lor q}\(1 - \bigcup_{ij}\(1 - e_n\cup f_j\)\).
\end{align*}

We now note that for any $e, f_1,\ldots,f_n\in U(M\sqcup N)$ satisfying $e\cup e = e$, in particular if $e$ is one of the generators of $U(M\sqcup N)$,
\[
\bigcup_{j=1}^n (1 - e\cup f_j) = 1 - e + e\cup\bigcup_{j=1}^n(1 - f_j).
\]
which can be proved by a simple induction on $n$. Hence
\[
\bigcup_{ij}(1 - e_n\cup f_j) = \bigcup_{i=1}^m(1 - e_n + e_n\cup F)
\]
where
\upline
\[
F = \bigcup_{j=1}^n(1 - f_j).
\]
Note that $(1 - f_n)\cup(1 - f_n) = 1 - f_n$, so $F\cup F = F$. Induction on $m$ can now be used to show that 
\[
\bigcup_{i=1}^m(1 - e_n + e_n\cup F) = E + F - E\cup F
\]
where
\upline
\[
E = \bigcup_{i=1}^m (1- e_n).
\]
We now have
\begin{align}\label{taugh}
\tau(p\land q) &= \tau_{p\lor q} (1 - E - F + E\cup F)\notag\\
&= \tau_{p\lor q}(1 - E) + \tau_{p\lor q}(1 - F) - \tau_{p\lor q}(1 - E\cup F).
\end{align}
But
\upline
\begin{align*}
\tau_{p\lor q}(1 - E) &= \tau_{p\lor q}\(1 - \bigcup_{i=1}^m(1 - \{i,0\})\)\\
&= \tau_p\( 1 - \bigcup_{i=1}^m(1 - \{i\})\)\\
&= \tau(p),
\end{align*}
and similarly
\[
\tau_{p\lor q}(1 - F) = \tau(q), \qquad \tau_{p\lor q}(1 - E\cup F) = \tau(p\lor q).
\]
Eq. \eqref{taugh} therefore gives the stated result.
\end{proof}

\subsection{Negation}
\label{subsec:negation}

For a one-time history $h = F_t(\Pi)$, negation is defined by the usual orthocomplement in the lattice of closed subspaces of Hilbert space:
\[
\lnot h = F_t(1 - \Pi)
\]
with truth value
\be\label{nottruth}
\tau(\lnot h) = 1 - \tau(h).
\ee
For a general history $h = h_1\land\cdots\land h_r$ where $h_n$ are one-time histories, we take negation to be given by
\[
\lnot h = \lnot h_1\lor\cdots\lor\lnot h_r,
\]
and for a general proposition $p = k_1\lor\cdots\lor k_r$, where $k_n$ are multi-time histories,
\[
\lnot p = \lnot k_1\land\cdots\land\lnot k_r.
\]
The truth values of such general $h$ and $p$ are aleady covered by the definition in the previous section. Our aim in this subsection is to prove that \eqref{nottruth} still holds.

\begin{lemma}\label{hnothagain} If $h_1$ and $h_2$ are any two histories,
\[
\tau(h_1) = \tau(h_1\land h_2) + \tau(h_1\land\lnot h_2)
\]
\end{lemma}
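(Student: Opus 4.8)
The plan is to induct on the length of $h_2$, i.e.\ on the number $r$ of one-time histories in a representation $h_2 = h^{(1)}\land\cdots\land h^{(r)}$. The base case $r=1$ is exactly Theorem \ref{hnoth}. For the inductive step I would write $h_2 = h_2'\land h$ with $h$ a one-time history and $h_2'$ a history of length $r-1$, and use that $\lnot h_2 = \lnot h_2'\lor\lnot h$, so that by distributivity of the lattice $h_1\land\lnot h_2 = (h_1\land\lnot h_2')\lor(h_1\land\lnot h)$. The point that makes everything fit together is that $h_1\land h_2'$ and $h_1\land\lnot h$ are themselves histories (conjunctions of one-time histories), so that the inductive hypothesis, Theorem \ref{hnoth}, and Theorem \ref{disj} are all applicable to them.

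I would then assemble the identity from four ingredients: (1) the inductive hypothesis for $h_1$ and $h_2'$, giving $\tau(h_1) = \tau(h_1\land h_2') + \tau(h_1\land\lnot h_2')$; (2) Theorem \ref{hnoth} for the history $h_1\land h_2'$ and the one-time history $h$, giving $\tau(h_1\land h_2') = \tau(h_1\land h_2) + \tau(h_1\land h_2'\land\lnot h)$; (3) the inductive hypothesis for the history $h_1\land\lnot h$ and $h_2'$, giving $\tau(h_1\land\lnot h) = \tau(h_1\land h_2'\land\lnot h) + \tau(h_1\land\lnot h_2'\land\lnot h)$; and (4) Theorem \ref{disj} for the propositions $h_1\land\lnot h_2'$ and $h_1\land\lnot h$, giving $\tau(h_1\land\lnot h_2) = \tau(h_1\land\lnot h_2') + \tau(h_1\land\lnot h) - \tau(h_1\land\lnot h_2'\land\lnot h)$ (using the lattice identity $(h_1\land\lnot h_2')\land(h_1\land\lnot h) = h_1\land\lnot h_2'\land\lnot h$). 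Substituting (2) and then (3) into (1) yields $\tau(h_1) = \tau(h_1\land h_2) + \big[\tau(h_1\land\lnot h_2') + \tau(h_1\land\lnot h) - \tau(h_1\land\lnot h_2'\land\lnot h)\big]$, and by (4) the bracket is $\tau(h_1\land\lnot h_2)$, which closes the induction.

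I expect the only real obstacle to be bookkeeping rather than substance: since $\lnot h_2$ is in general a genuine disjunction, not a single history, Theorem \ref{hnoth} cannot be invoked directly and one must pass through distributivity and Theorem \ref{disj}, being careful at each stage that the compound propositions being split are actually histories so that the inductive hypothesis applies. As a fallback, if that bookkeeping turns out to read awkwardly, I would instead iterate Theorem \ref{genhnoth} over the one-time conjuncts of $h_2$ to write $\tau(h_1) = \sum_{\alpha\in\{0,1\}^r}\tau\big(h_1\land(h^{(1)})^{\alpha_1}\land\cdots\land(h^{(r)})^{\alpha_r}\big)$ (with $(h)^0=h$, $(h)^1=\lnot h$, matching the convention in CH), observe that the $\alpha=(0,\ldots,0)$ term is $\tau(h_1\land h_2)$, and then identify the sum of the remaining $2^r-1$ terms with $\tau(h_1\land\lnot h_2)$ by applying the inclusion--exclusion definition \eqref{taulor} to $h_1\land\lnot h_2 = \bigvee_{i=1}^r(h_1\land\lnot h^{(i)})$, expanding each resulting $\tau$ via Theorem \ref{genhnoth}, and collapsing the double sum with the elementary cancellation $\sum_{\emptyset\neq S\subseteq T}(-1)^{|S|+1}=1$.
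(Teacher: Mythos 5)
Your argument is correct and takes essentially the same route as the paper's own proof: an induction on the length of $h_2$ that peels off a one-time conjunct, rewrites $h_1\land\lnot h_2$ by distributivity as a disjunction, and assembles the identity from Theorem \ref{hnoth}, Theorem \ref{disj} and the inductive hypothesis. The paper arranges the same four ingredients in a slightly different order (applying the inductive hypothesis once, to the pair $h_1\land h$ and the shortened history, rather than twice as you do), but the two computations are trivially equivalent.
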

\begin{proof}
By induction on the length of $h_2$. By Theorem \ref{hnoth}, the theorem holds for all $h_1$ and all one-time histories $h_2$. Suppose it holds for all $h_2$ of length $n$, and let $h'_2$ be a history of length $n+1$, so $h'_2 = h_2 \land h_3$ where $h_2$ has length $n$ and $h_3$ has length $1$. Then
\begin{align*}
\tau(h_1 \land h'_2) + \tau(h_1 \land \lnot h'_2)
&= \tau(h_1 \land h_2 \land h_3)  + \tau\big(h_1 \land (\lnot h_2 \lor\lnot h_3)\big)\\
&= \tau(h_1 \land h_2 \land h_3)  +\tau\big((h_1\land\lnot h_2)\lor(h_1\land\lnot h_3)\big)\\
&= \tau(h_1 \land h_2 \land h_3)  +\tau(h_1\land \lnot h_2) + \tau(h_1\land\lnot h_3)\\ 
&\qqquad -\tau(h_1\land\lnot h_2\land\lnot h_3)\\
&= \tau(h_1 \land h_2 \land h_3)  + \tau(h_1\land\lnot h_2\land h_3) + \tau(h_1\land \lnot h_3)\\ &\qqquad\text{by Theorem \ref{hnoth}}\\
&= \tau(h_1\land h_3) + \tau(h_1\land\lnot h_3)\\
&\qqquad \text{by the inductive hypothesis}\\
&= \tau (h_1) \quad\text{by Theorem \ref{hnoth} again}.
\end{align*}
\end{proof}

\begin{lemma}
If h is any history,
\[
\tau(\lnot h) = 1 - \tau(h)
\]
\end{lemma}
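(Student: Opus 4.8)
The plan is to establish $\tau(\lnot h) = 1 - \tau(h)$ by peeling off the last conjunct of $h$. Write $h = h_1 \land \cdots \land h_r$ with each $h_i = F_{t_i}(\Pi_i)$ a one-time history. If $r = 1$ the claim is immediate, being the defining equation \eqref{nottruth}. If $r \ge 2$, I would set $g = h_1 \land \cdots \land h_{r-1}$; then $g$ is itself a history, and by the definition of negation for a compound history $\lnot h = \lnot g \lor \lnot h_r$, where $\lnot g = \lnot h_1 \lor \cdots \lor \lnot h_{r-1}$ is a disjunction of one-time histories.

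The calculation then runs through the disjunction rule and the two conjunction-splitting results already in hand. First I would apply Theorem \ref{disj} to get $\tau(\lnot h) = \tau(\lnot g) + \tau(\lnot h_r) - \tau(\lnot g \land \lnot h_r)$. Then, since $\lnot g$ is a disjunction of histories and $h_r$ is a one-time history, I would use Theorem \ref{genhnoth} to rewrite $\tau(\lnot g \land \lnot h_r) = \tau(\lnot g) - \tau(\lnot g \land h_r)$; substituting this cancels the $\tau(\lnot g)$ terms and, using \eqref{nottruth} on the one-time history $h_r$, leaves $\tau(\lnot h) = 1 - \tau(h_r) + \tau(\lnot g \land h_r)$. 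Finally I would apply Lemma \ref{hnothagain} to the two genuine histories $h_r$ and $g$, which gives $\tau(h_r) = \tau(h_r \land g) + \tau(h_r \land \lnot g) = \tau(h) + \tau(\lnot g \land h_r)$, so that $\tau(\lnot g \land h_r) = \tau(h_r) - \tau(h)$; the two $\tau(h_r)$ contributions cancel and we are left with $\tau(\lnot h) = 1 - \tau(h)$.

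I do not expect a serious obstacle here --- the content is essentially bookkeeping --- but the one place I would take care is matching each splitting identity to the right hypothesis. Since $\lnot g$ is a disjunction of histories and not a single history, it cannot be fed into Lemma \ref{hnothagain}, and must instead be split against $h_r$ by the more general Theorem \ref{genhnoth}; whereas to split $\tau(h_r)$ against $g$ one does use Lemma \ref{hnothagain}, both of whose arguments are then honest histories. It is also worth noticing that the claim for $g$ itself is never invoked --- the $\tau(\lnot g)$ terms drop out before it would be needed --- so this is really a direct reduction of $h$ of length $r$ to $g$ of length $r-1$ together with the base case, rather than a genuine induction.
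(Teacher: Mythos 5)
Your argument is correct and uses essentially the same ingredients as the paper's proof: peel off a one-time conjunct, write $\lnot h$ as a disjunction, and apply Theorem \ref{disj} together with the splitting identities (Theorem \ref{genhnoth}, Lemma \ref{hnothagain}, Theorem \ref{hnoth}) and the base case \eqref{nottruth}. The only difference is organisational: the paper runs the same computation as an induction on the length of $h$, using the inductive hypothesis to replace $\tau(\lnot g)$ by $1-\tau(g)$, whereas your grouping cancels the $\tau(\lnot g)$ terms outright, so, as you correctly observe, no induction is actually needed.
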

\begin{proof}
By induction on the length of $h$. Suppose the result holds for all histories of length $n$, and let $h'$ be a history of length $n+1$, so that $h' = h_0\land h$ where $h_0$ is a one-time history and $h$ has length $n$. Then
\begin{align*}
\tau(\lnot h) &= \tau(\lnot h_0\lor\lnot h')\\ 
&= \tau(\lnot h_0) + \tau(\lnot h) - \tau(\lnot h_0 \land\lnot h).
\end{align*}
But $\tau(\lnot h) = 1 - \tau(h)$ by the inductive hypothesis, and
\[
\tau(\lnot h_0) - \tau(\lnot h_0\land\lnot h) = \tau(\lnot h_0\land h) \text{ by Theorem \ref{hnothagain},}
\]
so\upline
\begin{align*}
\tau(\lnot h') &= 1 - \tau(h) + \tau(\lnot h_0\land h)\\
&= 1 - \tau(h_0\land h) \quad\text{ by Theorem \ref{hnoth}}\\
&= 1 - \tau(h').
\end{align*}
The result holds for $n = 1$ by \eqref{nottruth}, and therefore holds for all $n$.
\end{proof}

\begin{thm}
If $p$ is any disjunction of histories,
\[
\tau(\lnot p) = 1- \tau(p)
\]
\end{thm}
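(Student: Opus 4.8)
The plan is to prove the identity by induction on the number $r$ of histories in a representation $p = k_1 \lor \cdots \lor k_r$ of $p$ as a disjunction of histories. The case $r = 1$ is exactly the preceding Lemma. For $r > 1$ I would peel off one disjunct, writing $p = p' \lor k$ with $k = k_r$ a history and $p' = k_1 \lor \cdots \lor k_{r-1}$, so that by de Morgan's law in the (classical) lattice $\TF$ we have $\lnot p = \lnot p' \land \lnot k$, with $\lnot p'$ and $\lnot k$ again disjunctions of histories.

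The essential new ingredient is a common generalisation of Lemma \ref{hnothagain} and Theorem \ref{genhnoth}: \textbf{Claim:} if $p$ and $q$ are \emph{any} two disjunctions of histories, then $\tau(p) = \tau(p \land q) + \tau(p \land \lnot q)$. Granting this, the induction closes quickly: applying the Claim to the pair $(\lnot p', k)$ gives $\tau(\lnot p) = \tau(\lnot p' \land \lnot k) = \tau(\lnot p') - \tau(\lnot p' \land k)$; the inductive hypothesis replaces $\tau(\lnot p')$ by $1 - \tau(p')$; applying the Claim to $(k, p')$ gives $\tau(\lnot p' \land k) = \tau(k) - \tau(p' \land k)$; and Theorem \ref{disj} gives $\tau(p) = \tau(p') + \tau(k) - \tau(p' \land k)$. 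Substituting and collecting terms yields $\tau(\lnot p) = 1 - \tau(p)$.

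To prove the Claim I would first dispose of the case where $q$ is a single history: expand $\tau(p \land q)$ and $\tau(p \land \lnot q)$ by inclusion--exclusion over the disjuncts $k_1, \ldots, k_r$ of $p$ (legitimate for arbitrary future propositions by iterating Theorem \ref{disj}), pair each term $\tau(k_{i_1} \land \cdots \land k_{i_t} \land q)$ with $\tau(k_{i_1} \land \cdots \land k_{i_t} \land \lnot q)$ and apply Lemma \ref{hnothagain} --- valid because $k_{i_1} \land \cdots \land k_{i_t}$ and $q$ are both histories --- and then recognise that the surviving terms $\tau(k_{i_1} \land \cdots \land k_{i_t})$ reassemble, again by inclusion--exclusion, into $\tau(p)$. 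The general case follows by a further induction on the number of disjuncts of $q$, splitting $q = q' \lor k'$ and using Theorem \ref{disj}, distributivity and the single-history case, in the same style as the proofs of Theorem \ref{tauineq} and Lemma \ref{hnothagain}. A shortcut is available: in a classical lattice $q \lor \lnot q$ is the top element and $p \land q \land \lnot q$ is the bottom element, on which $\tau$ vanishes, so $\tau(p) = \tau\!\left((p\land q) \lor (p \land \lnot q)\right)$ and one application of Theorem \ref{disj} gives the Claim directly.

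I expect the difficulty to be bookkeeping rather than ideas: one must keep careful track of which compound propositions are disjunctions of histories, so that $\tau$ is defined on them and Theorems \ref{disj} and \ref{genhnoth} apply, and one must ensure that every lattice manipulation used --- de Morgan, $\lnot\lnot k = k$, distributivity, and $q \land \lnot q$ being a history of $\tau$-value $0$ --- is justified by the assumed classical structure of $\TF$ together with the Born rule for the zero projector. One should also check for circularity, but there is none: the proof of the Claim uses only that $\tau$ of the bottom element is $0$, never the conclusion $\tau(\lnot q) = 1 - \tau(q)$ that the theorem asserts.
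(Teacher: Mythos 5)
Your argument is correct, and it reaches the result by a genuinely different route from the paper's, even though both are inductions on the number of disjuncts of $p$. The paper closes its induction in one stroke: writing $p' = p\lor k$, it applies Theorem \ref{disj} directly to the two disjunctions of histories $\lnot p$ and $\lnot k$, uses De Morgan to identify $\lnot p\lor\lnot k$ with $\lnot(p\land k)$, and then invokes the inductive hypothesis three times --- on $p$, on $k$, and on $p\land k$, the last being again a disjunction of $n$ histories $h_i\land k$ --- so that no marginalization identity beyond Lemma \ref{hnothagain} (for the base case) is ever needed. You instead route everything through the general identity $\tau(P)=\tau(P\land Q)+\tau(P\land\lnot Q)$ for arbitrary disjunctions of histories $P,Q$; this is precisely the paper's Theorem \ref{pnotq}, which the paper only states \emph{after} the present theorem with its proof waved off as ``a similar double induction.'' So your proof is longer and front-loads a lemma the paper defers, but in compensation it actually supplies the double-induction argument (inclusion--exclusion over the disjuncts of $P$ plus Lemma \ref{hnothagain}, then induction on the disjuncts of $Q$ via Theorem \ref{disj}) that the paper omits, and your check that this lemma does not secretly use $\tau(\lnot q)=1-\tau(q)$ correctly rules out circularity. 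One caution: your proposed ``shortcut'' via $\tau(p)=\tau\bigl(p\land(q\lor\lnot q)\bigr)$ quietly assumes that $\tau$ is well defined on lattice elements independently of their representation as disjunctions of histories, something the paper never establishes (it defines $\tau$ on expressions via the inclusion--exclusion formula); so you should rely on your main line of argument rather than the shortcut.
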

\begin{proof}
By induction on the number of disjoined histories in $p$. Suppose the theorem holds for all $p = h_1\lor\cdots\lor h_n$, and let $p' = p\lor k$ where $k$ is another history. Then
\begin{align*}
\tau(\lnot p') &= \tau(\lnot p \land \lnot k)\\
&= \tau(\lnot p) + \tau(\lnot k) - \tau(\lnot p\lor\lnot k)\\
&= \tau(\lnot p) + \tau(\lnot k) - \tau\big(\lnot(p\land k)\big)\\
&= 1 - \tau(p) + 1 - \tau(k) - [1 - \tau(p\land k)]
\end{align*}
by the inductive hypothesis, since $p\land k$ is also a conjunction of $n$ histories $h_n\land k$. Hence
\[
\tau(\lnot p') = 1 - \tau(p\lor k) = 1 - \tau(p')
\]
and the theorem is established for all $p$ by induction.
\end{proof}

A similar double induction can be used to prove the general version of Theorem \ref{hnoth}:
\begin{thm}\label{pnotq}
If $p$ and $q$ are any two elements of the lattice $\TF$,
\[
\tau(p) = \tau(p\land q) + \tau(p\land\lnot q).
\]
\end{thm}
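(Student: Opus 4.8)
The plan is to deduce the statement from Theorem~\ref{genhnoth} (which is the case where $q$ is a one-time history) by two successive inductions, exactly parallel to the inductions already used in Lemma~\ref{hnothagain} and in the negation theorems: first an induction on the \emph{length} of $q$ in the case where $q$ is a single history, and then an induction on the \emph{number of disjuncts} of $q$ when $q$ is a general element of $\TF$. Throughout, $p$ may remain an arbitrary disjunction of histories; there is no need to induct on $p$, since Theorems~\ref{genhnoth} and~\ref{disj} already apply to arbitrary disjunctions. The one recurring point to check is that every compound proposition appearing along the way (such as $p\land\lnot q$, $p\land h_0$, $(p\land\lnot q)\lor(p\land\lnot h_0)$) is again a disjunction of histories, which is immediate from the assumed distributivity of $\TF$, so that the earlier results legitimately apply to it.

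First induction, on the length of a single history $q$. The base case $q = F_t(\Pi)$ is Theorem~\ref{genhnoth}. For the inductive step write $q' = q\land h_0$ with $q$ of length $n$ and $h_0$ a one-time history, so that $\lnot q' = \lnot q\lor\lnot h_0$. Then
\[
\tau(p\land q') + \tau(p\land\lnot q') = \tau(p\land q\land h_0) + \tau\big((p\land\lnot q)\lor(p\land\lnot h_0)\big),
\]
and Theorem~\ref{disj} expands the second term as $\tau(p\land\lnot q) + \tau(p\land\lnot h_0) - \tau(p\land\lnot q\land\lnot h_0)$. Applying Theorem~\ref{genhnoth} to the disjunction $p\land\lnot q$ with the one-time history $h_0$ gives $\tau(p\land\lnot q) - \tau(p\land\lnot q\land\lnot h_0) = \tau(p\land\lnot q\land h_0)$, and applying the inductive hypothesis with $p\land h_0$ in place of $p$ gives $\tau(p\land q\land h_0) + \tau(p\land\lnot q\land h_0) = \tau(p\land h_0)$. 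Collecting terms leaves $\tau(p\land h_0) + \tau(p\land\lnot h_0)$, which is $\tau(p)$ by Theorem~\ref{genhnoth} once more. This establishes the identity for every disjunction of histories $p$ and every single history $q$.

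Second induction, on the number of disjuncts of $q$. The base case is the conclusion just obtained. For the step write $q' = q\lor k$ with $q$ a disjunction of $n$ histories and $k$ a single history, so $\lnot q' = \lnot q\land\lnot k$. By distributivity and Theorem~\ref{disj},
\[
\tau(p\land q') = \tau(p\land q) + \tau(p\land k) - \tau(p\land q\land k),
\]
while the inductive hypothesis applied with $p\land\lnot k$ in place of $p$ gives $\tau(p\land\lnot q\land\lnot k) = \tau(p\land\lnot k) - \tau(p\land q\land\lnot k)$. Adding these, then using the single-history case already proved in the form $\tau(p\land k) + \tau(p\land\lnot k) = \tau(p)$ and $\tau(p\land q\land k) + \tau(p\land q\land\lnot k) = \tau(p\land q)$, the terms telescope to $\tau(p)$. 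Since every element of $\TF$ is a disjunction of histories, this is the theorem.

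I expect no serious obstacle: the argument is purely organisational. The only thing requiring a little care is that, at each invocation of Theorem~\ref{disj} or Theorem~\ref{genhnoth}, the proposition fed to it really is a disjunction of histories and the history used with Theorem~\ref{genhnoth} really is one-time — both of which hold automatically by distributivity. The mild ``difficulty'', such as it is, is simply bookkeeping: keeping track of which conjunction plays the role of $p$ in each application of the inductive hypothesis so that the intermediate terms cancel in the right pattern.
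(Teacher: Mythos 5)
Your proof is correct and follows exactly the route the paper intends: the paper gives no details beyond ``a similar double induction can be used,'' and your two inductions (on the length of a single history $q$, then on the number of disjuncts of $q$, keeping $p$ a general disjunction throughout and invoking Theorems~\ref{genhnoth} and~\ref{disj} at each step) are precisely that double induction, carried out correctly.
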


We can now state the general logical properties of conjunction and disjunction in this temporal logic:

\begin{thm} \label{truth}
Let $p$ and $q$ be any two temporal propositions. 

\semilinespace

\emph{(i)}\upline
\[
\tau(p\land q) = 1 \; \iff \; \tau(p) = \tau(q) = 1;
\]

\emph{(ii)}\upline
\[
\tau(p) = 0 \text{ or } \tau(q) = 0 \;\impl\; \tau(p\land q) = 0;
\]

\emph{(iii)}\upline
\[
\tau(p\lor q) = 0 \;\iff\; \tau(p) = \tau(q) = 0;
\]

\emph{(iv)}\upline
\[
\tau(p) = 1 \text{ or } \tau(q) = 1 \;\impl\; \tau(p\lor q) = 1.
\]
\end{thm}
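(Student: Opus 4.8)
The plan is to derive all four parts from three results already in hand: the additivity relation $\tau(p\lor q)=\tau(p)+\tau(q)-\tau(p\land q)$ of Theorem~\ref{disj}, the partition identity $\tau(p)=\tau(p\land q)+\tau(p\land\lnot q)$ of Theorem~\ref{pnotq}, and the bound $0\le\tau(r)\le1$ valid for every $r\in\TF$ (which holds because each such $r$ is a disjunction of future histories, for which this bound has already been proved). I would argue entirely within $\TF$; the passage to the full lattice $\T$ is harmless because past- and present-tense propositions carry classical truth values.

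The one lemma I need first is monotonicity: for all $p,q\in\TF$ one has $\tau(p\land q)\le\tau(p)$ and $\tau(p)\le\tau(p\lor q)$, and likewise with $p$ and $q$ interchanged. The first inequality is immediate from Theorem~\ref{pnotq}: $\tau(p)=\tau(p\land q)+\tau(p\land\lnot q)\ge\tau(p\land q)$, the dropped term being non-negative. For the second, I would apply Theorem~\ref{pnotq} with $p\lor q$ playing the role of ``$p$'' and $p$ that of ``$q$''; by the lattice absorption law $(p\lor q)\land p=p$ this reads $\tau(p\lor q)=\tau(p)+\tau\big((p\lor q)\land\lnot p\big)$, and the last term is again non-negative because $(p\lor q)\land\lnot p\in\TF$.

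Granting the lemma, each part is one line. For (i): ``$\Leftarrow$'' follows from Theorem~\ref{disj}, which gives $\tau(p\land q)=\tau(p)+\tau(q)-\tau(p\lor q)\ge1+1-1=1$, hence $=1$; ``$\Rightarrow$'' follows from monotonicity, $\tau(p)\ge\tau(p\land q)=1$ and $\tau(q)\ge\tau(p\land q)=1$. For (ii), if $\tau(p)=0$ then $0\le\tau(p\land q)\le\tau(p)=0$, and symmetrically in $q$. For (iii): ``$\Leftarrow$'' follows from Theorem~\ref{disj}, $\tau(p\lor q)=0+0-\tau(p\land q)\le0$, so $\tau(p\lor q)=0$; ``$\Rightarrow$'' follows from monotonicity, $\tau(p),\tau(q)\le\tau(p\lor q)=0$. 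For (iv), if $\tau(p)=1$ then $1=\tau(p)\le\tau(p\lor q)\le1$, and symmetrically in $q$.

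I do not expect a genuine obstacle: once Theorems~\ref{disj} and~\ref{pnotq} are available the statement is essentially bookkeeping around the $[0,1]$ bound. The only place wanting a little care is the monotonicity lemma, where one should check that the auxiliary propositions $p\land\lnot q$ and $(p\lor q)\land\lnot p$ really are elements of $\TF$ (so that the $[0,1]$ bound applies to them) and that one is entitled to the absorption identity $(p\lor q)\land p=p$, which one is, since $\TF$ is a lattice.
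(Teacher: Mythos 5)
Your proof is correct, and it rests on the same three ingredients as the paper's (Theorem~\ref{disj}, Theorem~\ref{pnotq}, and the bound $0\le\tau\le1$ on disjunctions of histories), but it is organised differently for parts (ii)--(iv). Part (i) you prove exactly as the paper does: Theorem~\ref{pnotq} for the forward direction, Theorem~\ref{disj} for the converse. For (ii), however, the paper takes a detour through negation, computing $\tau(p\land q)=\tau(p)+\tau(q)-\tau(p\lor q)=\tau(\lnot p\land\lnot q)-\tau(\lnot q)=-\tau(p\land\lnot q)\le0$ using $\tau(\lnot r)=1-\tau(r)$; your one-line appeal to the monotonicity $0\le\tau(p\land q)\le\tau(p)$ is cleaner. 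Similarly, the paper derives (iii) and (iv) from (i) and (ii) by De Morgan duality, writing $\tau(p\lor q)=1-\tau(\lnot p\land\lnot q)$, whereas you get them directly from the upward half of your monotonicity lemma. That upward half, $\tau(p)\le\tau(p\lor q)$, is the one genuinely new step in your argument: it applies Theorem~\ref{pnotq} to the pair $(p\lor q,\,p)$ and invokes the absorption law $(p\lor q)\land p=p$. This is legitimate, with the caveat --- shared equally by the paper's own De Morgan step --- that it tacitly assumes $\tau$ assigns equal values to lattice-equal expressions presented as different disjunctions of histories; neither you nor the paper proves this independence of presentation, so you are no worse off. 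On balance your route isolates a single reusable monotonicity fact and avoids the negation machinery of Section~5.4 entirely in parts (ii)--(iv), while the paper's route keeps the four parts tied together by duality.
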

\begin{proof}
\noindent (i) If $\tau(p\land q) = 1$, then, by Theorem \ref{pnotq},
\[
\tau(p) = \tau(p\land q) + \tau(p\land\lnot q) \ge 1
\]
and therefore $\tau(p) = 1$. Conversely, if $\tau(p) = \tau(q) = 1$, then 
\[
\tau(p\land q) = \tau(p) + \tau(q) - \tau(p\lor q) \ge 1
\]
and therefore $\tau(p\land q) = 1$.

\semilinespace

\noindent (ii) If $\tau(p) = 0$, then
\begin{align*}
\tau(p\land q) &= \tau(p) + \tau(q) - \tau(p\lor q)\\
&= \tau\big(\lnot(p\lor q)\big) - \tau(\lnot q)\\
&= \tau(\lnot p \land \lnot q) - \tau(\lnot q)\\
&= -\tau(\lnot q\land p) \qquad \text{by Theorem \ref{pnotq}}\\
&\le 0,
\end{align*}
so $\tau(p\land q) = 0$.

\semilinespace

(iii) and (iv) follow from these by writing $\tau(p\lor q) = 1 - \tau(\lnot p\land \lnot q)$. 
\end{proof}

A possibly disturbing feature of this list of properties is the one-sidedness of the implications (ii) and (iv): $p\land q$ might be false without either $p$ or $q$ being false (though (i) shows that if $p\land q$ is not definitely true, then $p$ and $q$ cannot both be definitely true); and the truth of $p\lor q$ does not imply the truth of either $p$ or $q$ (though (iii) shows that this implication does hold if ``truth" ($\tau = 1$) is replaced by ``possible truth" ($\tau \neq 0$)). The first of these has already been discussed, but the second, concerning truth rather than falsity, might be felt to cast doubt on whether $\lor$ can legitimately be regarded as a generalised form of ``or" (a similar objection has been made to quantum logic \cite{QMPN}). Such an objection would be an argument against any identification of probabilities with truth values. However, far from being objectionable, this feature of disjunction seems to be necessary in a temporal logic that can deal with indeterminism. As Prior noted (\cite{Prior:logic} p.244), Aristotle's assertion ``Either there will be a sea-battle tomorrow or there won't" should be taken to mean, not what it appears to mean, but `` `Either ``There is a sea-battle going on" or ``there is no sea-battle going on" ' will be true tomorrow"; in symbols,
\[
\text{By } F_t(p) \lor F_t(\lnot p) \text{ Aristotle meant } F_t(p\lor \lnot p).
\]
But in the logic proposed here $F_t$ is taken to be a homorphism, so that
\[
F_t(p) \lor F_t(q) = F_t(p\lor q),
\]
showing that the meaning of the connective $\lor$ in this system is in accordance with Prior's reading of Aristotle. Aristotle's usage (or confusion, if that is what it is) is of course very common. I take Prior's elucidation of it as grounds for claiming that the disjunction $\lor$ occurring in this temporal logic is in fact the ``or" of common usage in talk about the open future.

\section{Retrospect}

In this final section I will consider to what extent the proposals in this paper meet the objectives set out in Section 2.

\paragraph{The open future} The reader may well have observed that there is nothing new or specifically quantum-mechanical about the indeterministic world of experiences which emerges from quantum mechanics, as described in Section 2; this is the classical picture of an open future. The assumption CH in Section 4.2 simply ensures the classical nature of our logic. Thus the logic developed here is a temporal logic appropriate to a metaphysics of time that incorporates an open future. It can be summarised as follows:

{\bf 1.} The lattice $\T$ of tensed (and dated) propositions is formed from a lattice $\E$ of tenseless propositions by means of lattice homomorphisms $\{P_t, N, F_t: t\in \R, t>0\}$ whose images generate $\T$.

{\bf 2.} A set of truth values is a map $\tau:\E\to\R$ satisfying

\semilinespace

\noindent (i) $0\le \tau(p) \le 1$;

\semilinespace

\noindent (ii) $\tau(p) = 0$ or $1$ if $p$ belongs to $N(\E)$ or $P_t(\E)$;

\semilinespace

\noindent (iii) $\tau(p\land q) + \tau(p\lor q) = \tau(p) + \tau(q)$;

\semilinespace

\noindent (iv) $\tau(p\land q) \le \tau(p), \hspace{1em} \tau(q) \le \tau(p\lor q)$.

\semilinespace
 
The logic based on these axioms will be discussed elsewhere.

\paragraph{Strictly Quantum}  The reader may also have observed that I have not kept to the faith boldly proclaimed in Section 2. I there renounced the devil and all his works, such as the collapse postulate, and his pomps, such as a separate classical realm. But the formula I took for the truth value (or probability) of a history is taken from a calculation of probability based on the collapse postulate; and the Consistent Histories assumption, which I introduced in order to   
get a recognisable logic, is just a way of eliminating the characteristic features of quantum mechanics. This assumption can be justified by decoherence theory within quantum mechanics, but only if one restricts the set of possible histories (for example, it will not be true if one wants to include times that are arbitrarily close to each other). Moreover, there is no justification for the definite past propositions $P_t(p)$; there is no thin red line between experience states going into the past, any more than there is one going into the future. There are only statements taken from present memory, which are aspects of the present experience states (in other words, statements about the past can only be in the perfect tense -- which is one of the three aspects of the present tense in English \cite{Joos}).

These objections can be met by restricting the logic as follows. The present sublattice $N(\E)$ describes present experience, including memory (so statements about the past occur in the perfect tense as elements of $N(\E)$). From the context of a particular experience state $\eta_0$, these elements of $N(\E)$ have truth values restricted to $\{0,1\}$. It is possible to make future-tense statements only about experiences at one future time. A statement about a future history must be regarded as an element of $F_{t_\text{f}}(\E)$, where $t_\text{f}$ is the last time in the history; the other elements of the history are in the future perfect tense, referring to the contents of memory at time $t_\text{f}$ (this, after all, is the only way we can verify a statement referring to a number of future times: we wait until the last time referred to and then consult records of earlier times). 

In this austere approach, the set of propositions is not a lattice but is the union of the sublattices $N(\E)$ and $F_t(\E)$ for all positive real numbers $t$. A conjunction or disjunction of elements of different sublattices is not a well-formed proposition in this logic. However, it can be regarded as a statement in the metalanguage, with the usual bivalent truth values: if $e$ and $f$ are experience statements, so that $F_s(e)$ and $F_t(f)$ are statements in the future tense, then the disjunction $F_s(e)\lor F_t(f)$ means, as usual, that one of the two propositions is (definitely) true, i.e.\ ``either $\tau(F_s(e)) = 1$ or $\tau(F_t(f)) =1$".

 \section{Conclusion}
 
 In this paper I have proposed a way to understand the statements about the future made by a sentient physical system in a universe described by quantum theory. I take it that we are such sentient physical systems, living in such a universe. The proposal is that the statements any one of us can make, from his or her own perspective in the universe, form a lattice with the usual logical operations of conjunction, disjunction and negation obeying conventional laws, but that the appropriate truth values for such statements (or ``histories") are many-valued, with values in the unit interval $[0,1]$ of real numbers, and are to be identified with probabilities. The paper has explored the logical properties of the truth values given by quantum theory, delineating which of these properties require the special assumption of ``consistent histories". We find that this assumption is sufficient to prove all the logical properties to be expected from the identification of truth values with probabilities.

%\bibliography{quantum}   
%\bibliographystyle{plain}

\end{document}